\newcommand{\ket}[1]{|{#1}\rangle}
\newcommand{\bra}[1]{\langle{#1}|}
\newcommand{\ketbra}[2]{|{#1}\rangle \, \langle{#1}|}
\newcommand{\braket}[2]{\langle{#1}|{#2}\rangle}
\DeclareMathOperator{\tr}{tr} 
\newcommand{\C}{\mathcal{C}}
\newcommand{\D}{\mathcal{D}}
\newtheorem{dfn}{Definition}
\newtheorem{lemma}{Lemma}
\newtheorem{thm}{Theorem}
\newtheorem{prot}{Protocol}
\begin{document}

\title{Maximally Sensitive Sets of States}

\author{Daniel Gottesman}
\email{dgottesman@perimeterinstitute.ca}
%\affiliation{Perimeter Institute, Waterloo, Canada; Quantum Benchmarks, Inc., Kitchener, Canada;   CIFAR QIS Program, Toronto, Canada}
\affiliation{Perimeter Institute, Waterloo, Canada}
\affiliation{Quantum Benchmark, Inc., Kitchener, Canada}
\affiliation{CIFAR QIS Program, Toronto, Canada}

\begin{abstract}
Coherent errors in a quantum system can, in principle, build up much more rapidly than incoherent errors, accumulating as the square of the number of qubits in the system rather than linearly.  I show that only channels dominated by a unitary rotation can display such behavior.  A \emph{maximally sensitive set of states} is a set such that if a channel is capable of quadratic error scaling, then it is present for at least one sequence of states in the set.  I show that the GHZ states in the $X$, $Y$, and $Z$ bases form a maximally sensitive set of states, allowing a straightforward test to identify coherent errors in a system.  This allows us to identify coherent errors in gates and measurements to within a constant fraction of the maximum possible sensitivity to such errors.  A related protocol with simpler circuits but less sensitivity can also be used to test for coherent errors in state preparation or if the noise in a particular circuit is accumulating coherently or not.
\end{abstract}

\maketitle

\section{Introduction}
\label{sec:intro}

As we enter into an era where experimental realizations of quantum computers are rapidly increasing in size and accuracy, the problem of characterizing errors in a quantum computer becomes more and more salient.  On the one hand, we need to know whether the device is reliable enough to allow fault-tolerant software~\cite{FT} to operate, which in turn allows logical computations to be completed with negligible errors.  The level of error also helps determine the amount of overhead (extra qubits and extra gates) needed for fault tolerance.  On the other hand, a good characterization of the errors in the system can provide guidance as to what is causing the errors and how to further reduce the error rates.

There are many different kinds of errors one might worry about.  Even in the simplest case of independent Markovian noise on each qubit, there is an infinite variety of possible types of error.  One conceptual distinction that is often made is between stochastic errors, which behave essentially probabilistically, and coherent errors, which can experience quantum interference to enhance or decrease the effect of the errors.  Then there are many errors which are a bit of both, for which the distinction between stochastic and coherent errors is not so clear.

One reason to make this distinction is because coherent errors which constructively interfere with each other can build up much faster than stochastic errors.  As a simple example, consider the $n$-qubit state $\ket{\psi_Z} = \ket{00 \ldots 0} + \ket{11 \ldots 1}$.  An example of a stochastic channel is one which does a $Z$ (phase flip) on each qubit independently with probability $p$.  Then for the state $\ket{\psi_Z}$, the probability of error when $p$ is small is roughly $pn$.  In contrast, an example coherent channel might do a phase rotation $\ket{0} \mapsto \ket{0}$, $\ket{1} \mapsto e^{i\theta} \ket{1}$.  Now the probability of error%
\footnote{Defined for this purpose to be the probability that a measurement projecting on the expected state $\ket{\psi_Z}$ gets a different result.}
for small $\theta$ is about $\theta^2 n^2$.  For a single qubit, the two channels have comparable noise rates when $p = \theta^2$, but the coherent channel rapidly outpaces the incoherent one.  Intuitively, the coherent errors allow the amplitudes of the error to add up, which are then squared via Born's rule to make probabilities, whereas for a stochastic error, the probability of error on different qubits adds up directly.  Another way of understanding the distinction is to consider the relationship between gate fidelity and diamond norm.  For some channels --- stochastic channels --- they can be similar, whereas for others --- coherent channels --- the gate fidelity can be the square of the diamond norm.

One consequence of this difference is that the threshold for fault-tolerant quantum computation to work could potentially be much worse for coherent errors than for stochastic errors~\cite{AGP}.  For instance, a fault-tolerant protocol which functions when $p < 10^{-4}$ might instead require $\theta \lesssim 10^{-4}$, which translates to a single-qubit error probability of about $10^{-8}$!  Now, we do not know for sure that this can happen in realistic situations, only that existing threshold proofs can only guarantee that the system works for the lower error rate.  It remains unclear whether it is actually possible for coherent errors in a fault-tolerant circuit to build up sufficiently close to the maximum rate to have such a big effect.  It is difficult to do simulations of coherent errors, but there has been some work on this topic (e.g., \cite{coherent,coherent2}), and those simulations suggest that the size of the effect is modest, but perhaps not negligible.  Nevertheless, it remains a possible source of concern for building large fault-tolerant quantum computers.

A number of solutions to coherent errors are known.  One possibility is dynamical decoupling~\cite{DD}, which takes advantage of the fact that coherent errors are always the same to get them to cancel out; it does not work on stochastic errors.  Another is randomized compiling~\cite{RC}, where the system is randomized periodically to prevent errors from building up coherently, effectively turning a coherent channel into a stochastic one.  Both involve extra gates or control pulses, so it makes sense to do them only if they are actually needed (both to save time and because the extra gates can potentially themselves be a source of errors).  The amount of overhead depends on the details of the gate implementations and circuits being performed; in some cases, the additional resources required by these techniques might be small or zero, but in other cases it could be more significant.

Therefore, it is useful to have ways to determine if the errors in a system are predominately coherent or mostly stochastic.  For instance, one can do a full characterization of the channel and gates in some way, for instance via gate set tomography~\cite{GST}.  However, gate set tomography is quite costly and the need to choose a gauge can cause difficulties, whereas in many applications the desired gauge is given by knowledge of the physics of the system being studied.  Another possibility is to measure the unitarity~\cite{unitarity}.  The unitarity combined with randomized benchmarking can determine the coherent error rate averaged over states and gates, but cannot determine the coherent error on a single type of gate and certainly not on state preparation or measurement.

Another common heuristic approach to detect coherent errors in a particular gate is to simply repeat the gate a number of times in sequence.  This will detect many cases of coherent errors: For instance, consistent over-rotation will accumulate coherently in this circuit.  It does not detect all coherent errors: For instance, consider a rotation about an axis which is slightly wrong.  Repeating the gate will make a larger rotation about the wrong axis, but will not amplify the error to allow it to build up rapidly even though a different sequence of gates could potentially cause a coherent build-up of this error.

A separate challenge with characterizing errors is that any serious candidate technology for building large quantum computers will have rather low error rates.  This means that a lot of data and many runs of an experiment are necessary to even distinguish the error rate from zero, and many more are needed to find out additional information about the nature of the errors.  In the case of stochastic errors, nothing can really be done about this.  When the errors  are largely coherent, however, the potential rapid rise in the error rate now becomes an advantange, allowing us to use fewer qubits and gates to identify the error.

Unfortunately, any given $n$-qubit state will not necessarily experience quadratic accumulation of errors for the particular channel in the system.  Therefore, it makes sense to ask what is a minimal set of states for which at least one will show a quadratic rise in the error rate, no matter what coherent channel it is subjected to.  In this paper, I find a set of three states, the GHZ states in the $X$, $Y$, and $Z$ bases, which have this property.  For any channel $\C$ applied to $n$ qubits, the infidelity between at least one of these three states and itself with $\mathcal{C}$ applied on all $n$ qubits is at least a constant fraction of the maximum achieveable infidelity for any state.  Therefore, by creating these three states in different sizes and looking for quadratic scaling of the error rate, we can identify the presence of coherent errors.  In fact, based on this set of measurements, we can go further and determine exactly what the coherent part of the channel is.  I will show that the coherent part is always a unitary rotation, so the characterization consists of an axis of rotation and an angle.  It turns out that this criterion is also sufficient to predict the possibility of coherent accumulation of errors in other contexts as well.

In a real system, the gates used to the create the GHZ states will be imperfect, and coherent errors in the encoding circuits could be confused with coherent errors in the channel.  The solution to this is to use randomized compiling on only \emph{part} of the circuit, the encoding part.  This converts any errors in the encoding part of the circuit into stochastic errors, leaving only coherent errors in the channel that we wish to measure.  This also enables us to measure coherent errors in gates and measurements rather than only in a straightforward near-identity channel.  To make the randomized compiling work, however, we need to assume that we have the ability to perform high-fidelity Pauli gates, with error rates low enough that they are negligible for the measurements we wish to make.  This is not an unreasonable assumption in many cases, since Pauli gates are simple single-qubit gates, but may not be always applicable in a system where all gates have a similar low error rate.  If we want to measure coherence in a fully universal set of gates, we also need an additional gate or gates such as the $\pi/4$ phase rotation to be essentially noiseless.

State preparation needs to be handled differently, since it cannot exhibit quadratic accumulation of errors with other state preparations, but can do so in conjunction with coherent errors in other gates.  This requires a formalism to discuss quadratic accumulation in more general circuits, which results in an alternative protocol for testing for coherent errors in state preparations as well as other circuit elements.  Compared to the protocol based on a maximally sensitive set of states, this alternative protocol is simpler but less sensitive.  One intriguing aspect of it is that it can be adapted to measure whether noise in a specific gate or other circuit is contributing to coherent accumulation of error in a particular circuit.  This might be helpful both theoretically for understanding the role of coherent errors in specific protocols and experimentally for identifying which coherent errors are most important to deal with for the desired application of the quantum computer.

Part of this work is quite similar to the technique and results of~\cite{LeadingKraus}, but this paper is concerned with the \emph{worst-case} behavior of channels, whereas \cite{LeadingKraus} was concerned with the behavior of channels averaged over inputs.  This is an important difference because many of the results of \cite{LeadingKraus} fail when the worst-case behavior is considered instead.  Thus it is significant that the leading Kraus approximation employed in that paper and this one continues to work when studying a channel's potential for quadratic error accumulation in the worst case, even though it does not apply to other worst-case properties of the channel.

In section~\ref{sec:quadratic}, I characterize the coherent part of a channel, and determine for any given channel the maximum amount of quadratic error scaling possible for that channel.  Then in section~\ref{sec:maxsensitive}, I show that the set of GHZ states is a maximally sensitive set of states and discuss how to implement simple protocols that are maximally sensitive to coherent errors.  Section~\ref{sec:qudits} discusses quadratic accumulation and maximally sensitive sets of states for qudits, in section~\ref{sec:different} I discuss what happens when the channel is not the same on all qubits, and in section~\ref{sec:gatenoise}, I explain how to apply randomized compiling to eliminate coherent noise in the gates we are not testing.  In section~\ref{sec:testgates}, I discuss how to use the protocol to measure coherent errors in gates and in measurements. I develop the definitions and formulas to study more general circuits in section~\ref{sec:circuits} and give a protocol for testing for coherent errors in state preparation in section~\ref{sec:contribute}.

\section{The Coherent and Incoherent Parts of a Channel}
\label{sec:quadratic}

In secs.~\ref{sec:quadratic}--\ref{sec:qudits}, I will consider the case where we have perfect quantum gates, and the only source of noise is through a quantum channel.  In particular, the protocol will consist of creating a particular state from the maximally sensitive set, followed by a quantum channel on every qubit, and then making a measurement.  Initially, I will consider a multi-qubit measurement --- in particular, a projection on the original state created --- but in section~\ref{sec:singlequbitmeasurement}, I will show how to use single-qubit measurements instead.  The gates involved in creating and measuring the states are perfect.  I will assume that the channel on each qubit is independent, and can be represented in Kraus form
\begin{equation}
\C (\rho) = \sum_k A_k \rho A_k^\dagger,
\end{equation}
with $\sum_k A_k^\dagger A_k = I$.  The Kraus operators are not unique, but some choices are better than others.  By using the Choi-Jamiolkowski isomorphism and diagonalizing the density matrix, we can choose a canonical set of Kraus operators (although even this might not be unique if the eigenvalues of the Choi matrix are degenerate).  In particular, in the limit we are interested in --- when the channel is close to the identity --- the leading Kraus operator $A_0$ is large and close to $I$ and all other Kraus operators are small.  In this case, many properties of the channel are determined by just $A_0$, allowing us to make a leading Kraus approximation~\cite{LeadingKraus}.

For a given channel $\C$, can it display quadratic error accumulation or not?  In order for this to be a well-defined question, we must work in a particular limit.  First, we must look at different values of $n$.  For low $n$ there will be some transient behavior.  For instance, the exact behavior could be of the form $an^2 + bn$.  Therefore, we want to take the limit of large $n$.  However, if the infidelity to the identity gets large, $O(1)$, it will necessarily start to saturate and the $n^2$ scaling will disappear.  For a fixed channel, this will always happen for large enough $n$.  Therefore, we need to work in a limit where $n$ is large but $n^2 r$ is small, where $r$ is the infidelity of a single use of the channel.  This means taking a family of channels $\C_r$ instead of a fixed channel.  We then look at the leading order behavior in $r$.  The coefficient of $r$ is a function of $n$, and the channel displays quadratic accumulation if it is a quadratic function.  More precisely:
\begin{dfn}
\label{def:accumulation}
Let $\C_r$ be an analytic family of quantum channels mapping a single $d$-dimensional qudit to a qudit, parametrized so that the worst-case infidelity between $\C_r$ and $I$ is $r$.  For fixed $r$ and $n$, let
\begin{equation}
F(n,r) = \min_{\ket{\psi}}  \bra{\psi} \C_r^{\otimes n} (\ketbra{\psi}{\psi}) \ket{\psi},
\end{equation}
where $\ket{\psi}$ runs over all $n$-qudit pure states.  By assumption, $F(1,r) = 1-r$.  Let $G(n) = \lim_{r \rightarrow 0}[1- F(n,r)]/r$ if the limit exists and let $G(n) = \Theta(n^a)$ for large $n$.  Then we say that $\C_r$ has \emph{order $a$ accumulation}.  In particular, if $a = 1$, $C_r$ has \emph{linear accumulation} and if $a=2$, $C_r$ has \emph{quadratic accumulation}.
\end{dfn}

While in principle this definition allows for accumulation by any polynomial, in fact, only linear and quadratic accumulation are possible.  Which one applies to a channel can be determined by looking at the leading Kraus operator $A_0$:

\begin{thm}
\label{thm:accumulation}
Let $\C_r$ be a family of quantum channels as above, and suppose that $\C_r$ has order $a$ accumulation and $d=2$ (so we are working with qubits).  Then $a=1$ or $a= 2$.  Moreover, given two such families of channels $\C_r$ and $\C'_r$ with the same leading Kraus approximation $A_0$ for all $r$, then $\C_r$ and $\C'_r$ have the same order accumulation.
\end{thm}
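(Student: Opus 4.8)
The plan is to extract the coefficient of $r$ in $1-F(n,r)$ from the leading Kraus operator and to show that all of the $n^2$ behavior, if any, lives in the coherent part of $A_0$. Writing the fidelity of a fixed input as $f_\psi = \sum_{\vec k}|\bra{\psi}A_{k_1}\otimes\cdots\otimes A_{k_n}\ket{\psi}|^2$, I would single out the all-zero term $|\bra{\psi}A_0^{\otimes n}\ket{\psi}|^2$ and polar-decompose the leading Kraus operator as $A_0 = U_r(I-M_r)$. For $d=2$ the unitary factor is, up to an irrelevant global phase, a single-axis rotation $U_r = e^{-i\theta_r\,\hat m\cdot\vec\sigma}$, and the constraint $\sum_k A_k^\dagger A_k = I$ gives $2M_r\approx\sum_{k\ge1}A_k^\dagger A_k$, so that $\|M_r\| = O(r)$ and each subleading $A_k = O(\sqrt r)$. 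Both $\theta_r$ and $M_r$ are determined by $A_0$ alone.

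Next I would expand $f_\psi$ to first order in $r$ at fixed $n$. With $\bar H = \sum_j(\hat m\cdot\vec\sigma)_j$ one finds $1 - |\bra{\psi}U_r^{\otimes n}\ket{\psi}|^2 = \theta_r^2\,\mathrm{Var}_\psi(\bar H) + O(\theta_r^4)$, and I would check that every other contribution is only linear in $n$: the cross terms between $U_r^{\otimes n}$ and the $\sum_j(M_r)_j$ correction, and the single-jump terms with exactly one $k_j\ne0$, each carry a factor of size $O(r)$ and come in $O(n)$ copies, hence total $O(rn)$, while terms with two or more jumps or two or more factors of $M_r$ are $O(r^2)$ and vanish once divided by $r$ and sent to the limit. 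Thus $1 - f_\psi = \theta_r^2\,\mathrm{Var}_\psi(\bar H) + (\mathrm{incoherent})_\psi + o(r)$, where the incoherent remainder is nonnegative and uniformly $O(rn)$, so the only possible source of $n^2$ growth is the coherent variance.

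The dichotomy then follows from two elementary facts about $\bar H$: for every state $\mathrm{Var}_\psi(\bar H)\le(n\,\|\hat m\cdot\vec\sigma\|)^2 = O(n^2)$, and the GHZ state formed from the two eigenvectors of $\hat m\cdot\vec\sigma$ attains $\mathrm{Var}_\psi(\bar H)=\Theta(n^2)$. The hypothesis $F(1,r)=1-r$ forces $\theta_r^2 = O(r)$, since a larger single-qubit rotation would already overspend the infidelity budget; analyticity of $\C_r$ then pins down a definite leading power, so $\theta_r^2/r$ converges, either to a positive constant or to zero. In the first case the GHZ state gives $G(n)=\Omega(n^2)$ while the variance bound gives $G(n)=O(n^2)$, and the $O(rn)$ incoherent part cannot cancel the quadratic term, so $a=2$. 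In the second case the coherent piece is $o(r)$ and drops out of $G(n)$ entirely, leaving an incoherent part that is $O(rn)$ from above and at least $n$ from below, since the product input $\ket{\phi}^{\otimes n}$ on the worst single-qubit state gives $1-(1-r)^n$; hence $a=1$. Since $\theta_r$ is read from $A_0$, two families with the same $A_0$ for every $r$ land in the same case and share the order of accumulation.

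The step I expect to be the main obstacle is justifying the interchange of the optimization $\min_\psi$ with the limit $r\to0$ that defines $G(n)$: one must argue that $\lim_{r\to0}[1-\min_\psi f_\psi]/r = \max_\psi\lim_{r\to0}[1-f_\psi]/r$, which requires the first-order expansion above to hold uniformly over the compact set of input states so that the remainder is a genuine uniform $o(r)$. The ingredients are in hand --- $\|M_r\|$, the $\|A_k\|$, and $\|\bar H\|=O(n)$ are all state-independent --- but assembling them into a clean uniform bound, together with confirming from analyticity that $\theta_r^2/r$ actually converges so that the borderline between the two regimes is well defined, is where the care is needed.
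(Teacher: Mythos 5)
Your overall strategy matches the paper's: polar-decompose the leading Kraus operator into a coherent rotation times an incoherent Hermitian factor, show every non-leading contribution is $O(nr)$, and locate all possible $n^2$ growth in the rotation, which a GHZ state aligned with the rotation axis saturates. Your packaging of the quadratic term as $\theta_r^2\,\mathrm{Var}_\psi(\bar H)$ is a clean equivalent of the paper's explicit expansion of $A_0^{\otimes n}$, and making explicit that analyticity is what forces $\theta_r^2/r$ to converge is a point the paper uses only implicitly. But there is a genuine gap at the foundation: you claim that the completeness relation $\sum_k A_k^\dagger A_k = I$ ``gives'' $\|M_r\|_\infty = O(r)$ and $\|A_k\|_\infty = O(\sqrt r)$ for $k \neq 0$. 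It does not. The relation only yields the identity $2M_r - M_r^2 = \sum_{k\ge 1}A_k^\dagger A_k$, which ties the two quantities to each other but bounds neither, and the worst-case infidelity $r$ of the channel does not control them for an arbitrary Kraus representation. Concretely, the identity channel ($r=0$) written with the redundant Kraus operators $A_0 = A_1 = \frac{1}{\sqrt 2}I$ satisfies the completeness relation, yet $\|M_r\|_\infty = 1 - \frac{1}{\sqrt 2}$ and $\|A_1\|_\infty = \frac{1}{\sqrt 2}$.

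The bounds you need hold only for a suitably chosen canonical Kraus representation, and proving that such a choice exists is exactly the content of the paper's Lemma~\ref{lemma:LKA}, which your proposal effectively assumes: one diagonalizes the Choi state $\Phi_r = (I\otimes \C_r)(\ketbra{\phi^+}{\phi^+})$, uses the fact that worst-case fidelity $1-r$ forces $\bra{\phi^+}\Phi_r\ket{\phi^+} = 1-\Theta(r)$, concludes that the top Choi eigenvalue is $1-O(r)$ while all other eigenvalues are $O(r)$, and reads off Kraus operators satisfying $\|A_k\|_\infty = O(\sqrt{\lambda_k}) = O(\sqrt r)$ for $k\neq 0$; only then does your relation deliver $\|M_r\|_\infty = O(r)$. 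Without this lemma, the truncation to the all-zero term, the $O(nr)$ bound on the cross and single-jump terms, and even the well-definedness of $\theta_r$ as a function of ``the'' leading Kraus operator are all unsupported. By contrast, the min/limit interchange you single out as the main obstacle is minor: at fixed $n$ the state space is compact and all the operator-norm bounds in the expansion are state-independent, which is why the paper does not need to belabor that point.
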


\begin{proof}
First, note that $a \geq 1$ always, since by considering a product state $\ket{\phi}^{\otimes n}$, we find
\begin{equation}
F(n,r) \leq \left(\min_{\ket{\phi}} \bra{\phi} \C_r (\ketbra{\phi}{\phi}) \ket{\phi}\right)^n = (1-r)^n = 1-nr + O(r^2).
\end{equation}
Thus, $G(n) \geq n$, so $a \geq 1$.

Now consider the action of $n$ copies of the full channel on the state:
\begin{equation}
\C_r^{\otimes n} (\rho) = \sum_{k_1, \ldots, k_n} \left( \bigotimes_{i=1}^n A_{k_i} \right) \rho \left( \bigotimes_{i=1}^n A_{k_i}^\dagger \right).
\end{equation}
(The dependence on $r$ of the right-hand side has been suppressed to simplify notation.)  The next step is to make a leading Kraus approximation, enabled by the following lemma, which I will prove at the end of the section.

\begin{lemma}
\label{lemma:LKA}
When $\{\C_r\}$ is an analytic family of channels such that the worst-case fidelity between $\C_r$ and $I$ is $r$, then there is a choice of Kraus operators $\{A_k\}$ for $\C_r$ such that $\|A_k\|_\infty = O(\sqrt{r})$ for $k \neq 0$.
\end{lemma}

Therefore, the only terms that can contribute to first order in $r$ are those with at most one non-zero $k_i$:
\begin{align}
\C_r^{\otimes n} (\rho) &= A_0^{\otimes n} \rho (A_0^\dagger)^{\otimes n} + 
 \sum_{i=1}^n \sum_{k\neq 0} \left( A_0^{\otimes i-1} \otimes A_k \otimes A_0^{\otimes n-i} \right) \rho \left( (A_0^\dagger)^{\otimes i-1} \otimes A_k^\dagger \otimes (A_0^\dagger)^{\otimes n-i} \right) + O(r^2) \\
&= A_0^{\otimes n} \rho (A_0^\dagger)^{\otimes n} + O(nr) + O(r^2).
\end{align}
When determining $G(n)$, the last $O(r^2)$ term disappears.  The second term $O(nr)$ contributes a term to $G$ that is at most linear in $n$.  This means that the order of accumulation is greater than $1$ iff the first term is $\Omega(n^a r)$ for $a>1$ and that the order depends only on $A_0$.

Now let us consider the effect of the first term when the input $\rho = \ketbra{\psi}{\psi}$ is a pure state:
\begin{equation}
\bra{\psi} A_0^{\otimes n} \ketbra{\psi}{\psi} (A_0^\dagger)^{\otimes n} \ket{\psi} = |\bra{\psi} A_0^{\otimes n} \ket{\psi}|^2.
\end{equation}

Using the singular value decomposition, $A_0 = W \tilde{D} V$, where $\tilde{D}$ is real and diagonal and $W$ and $V$ are unitary.  Setting $U = WV$, $D = V^\dagger \tilde{D} V$, we can rewrite this as 
\begin{align}
A_0 &= UD  \label{eq:svd} \\
U & = \cos \theta I + i \sin \theta \vec{v} \cdot \vec{\sigma} \\
D & = (1-p)I + \Delta \vec{w} \cdot \vec{\sigma}. 
\end{align}
Here, $\vec{\sigma} = (X, Y, Z)$ is the vector of Pauli matrices, 
\begin{equation}
X = \begin{pmatrix} 0 & 1 \\ 1 & 0 \end{pmatrix}, \quad Y = \begin{pmatrix} 0 & -i \\ i & 0 \end{pmatrix}, \quad Z = \begin{pmatrix} 1 & 0 \\ 0 & -1 \end{pmatrix},
\end{equation}
$\vec{v}$ and $\vec{w}$ are real norm-1 dimension 3 vectors, and $\theta$, $p$, and $\Delta$ are real.  Since $A_0^\dagger A_0 \leq I$, we have $p \geq |\Delta|$ and $p + |\Delta| \leq 1$.   Since $A_0$ is close to $I$, $\theta$, $p$, and $\Delta$ are all small.  To leading order in these quantities for each Pauli, we have
\begin{equation}
A_0 = (1 - p - \theta^2/2) I + ( i \theta \vec{v} + \Delta \vec{w}) \cdot \vec{\sigma}.
\end{equation}
Since $p \geq |\Delta|$, the term $\Delta \theta$ is automatically $o(p)$.

Let us determine the scaling of $r$ with $\theta$, $p$, and $\Delta$ for $A_0$.  Let $\ket{\psi} = \alpha \ket{0} + \beta \ket{1}$, and we choose to work in the basis where $U$ is diagonal and $D$ has no $Y$ component, so $\vec{v} = (0, 0, 1)$ and $\vec{w} = (w_X, 0, w_Z)$.  Then
\begin{equation}
\bra{\psi} A_0 \ket{\psi} = 1 - p - \theta^2/2 + (i \theta + \Delta w_Z) (|\alpha|^2 - |\beta|^2) + \Delta w_X (\alpha^* \beta + \beta^* \alpha)
\end{equation}
and
\begin{equation}
1 - |\bra{\psi} A_0 \ket{\psi}|^2 = 2p + \theta^2 - 2 \Delta w_Z (|\alpha|^2 - |\beta|^2) - 2 \Delta w_X (\alpha^* \beta + \beta^* \alpha) - \theta^2 (|\alpha|^2 - |\beta|^2)^2.
\end{equation}
In particular, we see that at least one of $p$ and $\theta^2$ must be $\Theta(r)$.  (If $\Delta$ is $\Theta(r)$, $p$ must be too since $|\Delta| \leq p$.)

Now let us look at the $n$-qubit case. To leading order,
\begin{equation}
A_0^{\otimes n} = (1-np - n\theta^2/2) I + (i \theta \vec{v} + \Delta \vec{w}) \cdot \left(\sum_{j=1}^n \vec{\sigma}_j \right) - \sum_{j<k = 1}^{n} \theta^2 (\vec{v} \cdot \vec{\sigma}_j) (\vec{v} \cdot \vec{\sigma}_k).
\label{eqn:A0n}
\end{equation}
The leading $I$ term in $|\bra{\psi} A_0^{\otimes n} \ket{\psi}|^2$ will always be $1 - \Theta(nr)$ regardless of state $\ket{\psi}$ and the relative size of $p$ and $\theta^2$.

The real part of the second term $\Delta \sum \vec{w} \cdot \vec{\sigma}_j$ contributes $O(\Delta n)$.  The remaining terms $i \theta \sum \vec{v} \cdot \vec{\sigma}_j$ and $\sum \theta^2 (\vec{v} \cdot \vec{\sigma}_j) (\vec{v} \cdot \vec{\sigma}_k)$ can contribute $O(\theta^2 n^2)$, the imaginary one giving the square of something with $n$ terms and the real one having $n^2$ terms.

If $p = \Theta(r)$ and $\theta^2 = o(r)$, then $G(n)$ is a function of $p$ and $\Delta$ and we find that $G(n) = \Theta(n)$, so $a=1$.  If $\theta^2 = \Theta(r)$ and $p = o(r)$, then $G(n)$ is a function of $\theta^2$.  In this case, $a$ could be higher than $1$, but not more than $2$.  If both $p$ and $\theta^2$ are $\Theta(r)$, then all of $p$, $\Delta$, and $\theta^2$ contribute to $G(n)$, and again $a$ could be more than $1$ but at most $2$.

In particular, when $\theta^2 = \Theta(r)$, we can look at the contribution of $i \theta \sum \vec{v} \cdot \vec{\sigma}_j$ and $\sum \theta^2 (\vec{v} \cdot \vec{\sigma}_j) (\vec{v} \cdot \vec{\sigma}_k)$ to $G(n)$.  Again in the basis where $U$ is diagonal, the GHZ state $\ket{\psi} = \frac{1}{\sqrt{2}} (\ket{0 \ldots 0} + \ket{1 \ldots 1})$ gives us $\bra{\psi} Z_j \ket{\psi} = 0$ and
\begin{equation}
\bra{\psi} \sum_{j<k} \theta^2 Z_j Z_k  \ket{\psi} = \theta^2 n(n-1)/2.
\end{equation}
Thus, $G(n) \geq n\theta^2 + n^2 \theta^2$ (allowing for the possibility of states other than the GHZ giving higher infidelity), so $a=2$ in this case.

In fact, the GHZ achieves the highest possible quadratic scaling $n^2 \theta^2$: the $i$ term can only contribute positively to $F(n,r)$ since it gets an absolute value squared.  Thus, the largest contribution it can make to infidelity is to be $0$ as with the GHZ state.  The $\sum_{jk}$ term is also maximal for the GHZ state.  For fixed $n$, it is possible that some state other than GHZ could be optimal (at least when $p$ and $\theta$ are both relevant), but as $n \rightarrow \infty$, the quadratic term will dominate and GHZ state is asymptotically optimal. 
\end{proof}

We still need to prove lemma~\ref{lemma:LKA}:
\begin{proof}[Proof of lemma]
We begin by using the Choi-Jamiolkowski isomorphism to get the state $\Phi_r = (I \otimes \C_r) (\ketbra{\phi^+}{\phi^+})$.  $\ket{\phi^+} = \frac{1}{\sqrt{d}} \sum_{j=1}^d {\ket{dd}}$ is a maximally entangled state.  Since $\C_r$ is close to the identity, $\Phi_r$ is close to $\ket{\phi^+}$, and in particular,
\begin{equation}
\bra{\phi^+} \Phi_r \ket{\phi^+} = 1 - \Theta(r).
\end{equation}
The upper and lower bounds on the fidelity are $d$-dependent, but we are considering $d$ to be a constant here.  We can diagonalize $\Phi_r$:
\begin{equation}
\Phi_r = \sum_k \lambda_k \ketbra{\phi_k}{\phi_k}.
\end{equation}
The $\ket{\phi_k}$ are orthonormal pure states and $\sum_k \lambda_k = \tr \Phi_r = 1$.
\begin{equation}
\bra{\phi^+} \Phi_r \ket{\phi^+} = \sum_k \lambda_k |\braket{\phi^+}{\phi_k}|^2
\end{equation}
Assume without loss of generality that $|\braket{\phi^+}{\phi_k}|^2$ is largest for $k=0$.  Then $\bra{\phi^+} \Phi_r \ket{\phi^+} \leq |\braket{\phi^+}{\phi_0}|^2$, so for the former to be $1-\Theta(r)$, we must have $|\braket{\phi^+}{\phi_0}|^2 = 1 - O(r)$.  Now, $\sum_k |\braket{\phi^+}{\phi_k}|^2 = 1$ (or $\leq 1$ if the $\ket{\phi_k}$ don't form a full basis), so $|\braket{\phi^+}{\phi_k}|^2 = O(r)$ for $k \neq 0$.  It must therefore also be the case that $\lambda_0 = 1 - O(r)$; but this implies that $\lambda_k = O(r)$ for $k \neq 0$.

The relevance of this is that we can choose the Kraus operators for $\C_r$ to be 
\begin{equation}
A_k \ket{\psi} = \sqrt{d \lambda_k} (\bra{\psi^*} \otimes I) \ket{\phi_k},
\end{equation}
where the $*$ means complex conjugate in some basis used to define $A_k$.  This means that $\| A_k \|_\infty = O(\sqrt{\lambda_k}) = O(\sqrt{r})$ for $k \neq 0$, proving the lemma.
\end{proof}

\section{Maximally Sensitive Set of States}
\label{sec:maxsensitive}

\begin{dfn}
\label{def:mss}
Let $S_n$ be a set of $n$-qudit states for $n = n_0, n_1, \ldots$ an infinite monotonically increasing sequence of positive integers.  Then $\{S_n\}$ is a \emph{maximally sensitive set} if, for any family of qudit channels $\C_r$ with quadratic accumulation, there exists a sequence of states $\ket{\psi_i} \in S_{n_i}$ such that $\tilde{G}(\{\ket{\psi_i}\}, n_i) = \Theta(n_i^2)$, where
\begin{align}
\tilde{G}(\{\ket{\psi_i}\}, n_i) &= \lim_{r \rightarrow 0}[1- \tilde{F}(\ket{\psi_i},n_i,r)]/r, \\
\tilde{F}(\ket{\psi_i}, n_i,r) &= \bra{\psi_i} \C_r^{\otimes n} (\ketbra{\psi_i}{\psi_i}) \ket{\psi_i}.
\end{align}
\end{dfn}
Here, $\tilde{F}$ and $\tilde{G}$ are defined just like $F$ and $G$ in def.~\ref{def:accumulation}, but for a specific sequence of states $\ket{\psi_i}$.  A maximally sensitive set of states is thus one for which, for any channel that has quadratic accumulation, there is a sequence of states drawn from the set for progressively larger numbers of qubits which witness the quadratic accumulation of error.

When $S_n$ is the full Hilbert space on $n$ qubits, it trivially forms a maximally sensitive set of states.  However, it is not immediately clear that maximally sensitive sets exist where each set $S_n$ is finite.  For instance, the proof of thm.~\ref{thm:accumulation} shows that to realize the optimal quadratic scaling one would need $S_n$ to be the set of all GHZ states in every basis, an infinite set.

If we don't demand the optimal scaling, but merely scaling within a constant factor of optimal, then for qubits, it is sufficient to take the GHZ states in just 3 bases, the eigenbases of $X$, $Y$, and $Z$:
\begin{thm}
\label{thm:MSS}
Let $S_n$ consist of the three states $\ket{\psi_{n,P}} = \frac{1}{\sqrt{2}} (\ket{0_P, 0_P, \ldots, 0_P} + \ket{1_P, 1_P, \ldots, 1_P}$, where $\ket{0_P}$ and $\ket{1_P}$ are the $+1$ and $-1$ eigenstates of the Pauli $P$, which runs over $X$, $Y$, and $Z$.  Then $\{S_n\}$ is a maximally sensitive set. Moreover, for any family of  channels $\C_r$ with quadratic accumulation, $\tilde{G}(\{\ket{\psi_{n,P}}\}, n)$ is at least $1/3$ of the optimal quadratic scaling for $\C_r$ for at least one of $P = X, Y, Z$.
\end{thm}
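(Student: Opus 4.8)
The plan is to reuse the machinery from the proof of thm.~\ref{thm:accumulation}, but crucially \emph{without} the freedom to rotate into the basis where $U$ is diagonal: the three GHZ states live in fixed laboratory bases, so the rotation axis $\vec{v}$ of the unitary part of $A_0$ must now be treated as an arbitrary unit vector $\vec{v} = (v_X, v_Y, v_Z)$. First I would recall from that proof that quadratic accumulation forces $\theta^2 = \Theta(r)$ and that, after the leading Kraus approximation, the only source of $n^2$ scaling is the term $-\theta^2 \sum_{j<k} (\vec{v}\cdot\vec{\sigma}_j)(\vec{v}\cdot\vec{\sigma}_k)$ in eq.~\eqref{eqn:A0n}, with the optimal quadratic contribution to the infidelity being $\theta^2 n^2$ to leading order, achieved by the GHZ state aligned with $\vec{v}$.

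Next I would evaluate $\bra{\psi_{n,P}} A_0^{\otimes n}\ket{\psi_{n,P}}$ for each of the three states, the key input being the one- and two-body Pauli expectation values on a GHZ state. Since $\ket{\psi_{n,P}}$ is a stabilizer state whose stabilizer group contains, among Paulis of weight at most $2$, only the operators $P_j P_k$, every single-body expectation $\bra{\psi_{n,P}}\sigma_a^{(j)}\ket{\psi_{n,P}}$ vanishes and $\bra{\psi_{n,P}}\sigma_a^{(j)}\sigma_b^{(k)}\ket{\psi_{n,P}} = \delta_{a,P}\,\delta_{b,P}$ for $j\neq k$ (this is immediate for $P=Z$ and follows for $P=X,Y$ by conjugating with the single-qubit Clifford that maps $Z$ to $P$). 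The vanishing of the single-body terms is what makes these states convenient: both the imaginary $i\theta\sum_j\vec{v}\cdot\vec{\sigma}_j$ term and the real $\Delta\sum_j\vec{w}\cdot\vec{\sigma}_j$ term drop out, so there is no imaginary part to square and no Hermitian-correction interference to track.

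With these expectation values in hand, the double sum collapses cleanly to $\bra{\psi_{n,P}}\sum_{j<k}(\vec{v}\cdot\vec{\sigma}_j)(\vec{v}\cdot\vec{\sigma}_k)\ket{\psi_{n,P}} = v_P^2\, n(n-1)/2$, so the quadratic part of the infidelity $1 - |\bra{\psi_{n,P}} A_0^{\otimes n}\ket{\psi_{n,P}}|^2$ is $v_P^2\,\theta^2 n^2$ to leading order. Hence $\tilde{G}(\{\ket{\psi_{n,P}}\}, n)$ has leading quadratic coefficient exactly $v_P^2$ times that of the optimal aligned-GHZ scaling. Finally I would invoke $v_X^2 + v_Y^2 + v_Z^2 = 1$, which by pigeonhole forces $\max_P v_P^2 \ge 1/3$; choosing this $P$ (a fixed choice, since $\vec{v}$ has a well-defined $r\to 0$ limit) gives a sequence of states with $\tilde{G} = \Theta(n^2)$ attaining at least $1/3$ of the optimal quadratic scaling, which establishes both the maximally sensitive property and the explicit constant.

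I expect the main obstacle to be the bookkeeping of the second step: verifying carefully that \emph{all} off-axis and cross two-body correlators $\bra{\psi_{n,P}}\sigma_a^{(j)}\sigma_b^{(k)}\ket{\psi_{n,P}}$ with $(a,b)\neq(P,P)$ vanish, and that every single-body term vanishes, so that the contribution of the $\theta^2$ term reduces to precisely $v_P^2$ times the aligned value, with no leakage from the neglected Hermitian ($\Delta$) and $\theta$-linear pieces. Once that reduction is pinned down, the constant $1/3$ is simply the pigeonhole bound on the squared components of a unit vector.
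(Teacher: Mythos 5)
Your proposal is correct and follows essentially the same route as the paper's proof: it starts from eq.~\eqref{eqn:A0n} with $\vec{v}$ treated as an arbitrary unit vector in the laboratory basis, uses the vanishing single-body and $\delta_{a,P}\delta_{b,P}$ two-body GHZ correlators to reduce the quadratic infidelity coefficient to $\theta^2 v_P^2$, and applies the pigeonhole bound $\max_P v_P^2 \geq 1/3$ for a unit vector to conclude. The only difference is presentational: you justify the correlator values via the stabilizer group of $\ket{\psi_{n,P}}$, which the paper simply asserts.
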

Here, $\ket{b_Z} = \ket{b}$, $\ket{b_X} = \frac{1}{\sqrt{2}} (\ket{0} + (-1)^b \ket{1})$, and $\ket{b_Y} =  \frac{1}{\sqrt{2}} (\ket{0} + i (-1)^b \ket{1})$.%
\footnote{Note that there is a convention choice in the relative phase of $\ket{0_Y}$ versus $\ket{1_Y}$, since we could have equally chosen $\ket{1_Y} =  \frac{1}{\sqrt{2}} (i \ket{0} + \ket{1})$, also a $-1$ eigenstate of $Y$ but one that gives us a different $\ket{\psi_{n,Y}}$.  There is a similar convention choice in $\ket{b_X}$ and $\ket{b_Z}$, but those choices are more standard.  The effects of this choice show up in sec.~\ref{sec:singlequbitmeasurement} by determining the basis needed for measurement of the state.}

\begin{proof}
We begin with eq.~\eqref{eqn:A0n}.  Then, to leading order,
\begin{equation}
\bra{\psi_{n,P}} A_0^{\otimes n} \ket{\psi_{n,P}} = (1-np-n\theta^2/2) + 0 - \binom{n}{2} \theta^2 v_P^2
\end{equation}
and
\begin{equation}
\tilde{F}(\ket{\psi_{n,P}}, n,r) = 1 - 2np- n(1-v_P^2) \theta^2 - n^2 \theta^2 v_P^2.
\end{equation}
Since $\vec{v}$ is a unit vector, $v_P^2 \geq 1/3$ for at least one $P$.

Again, the accumulation depends on which of $\theta^2$ and $p$ are $\Theta(r)$.  If only $p$ is, the accumulation is linear.  If $\theta^2$ is $\Theta(r)$, then the accumulation is quadratic, and the quadratic term in $\tilde{G}(\{\ket{\psi_{n,P}}\}, n)$ is a factor $v_P^2$ less than the value for a GHZ state in the eigenbasis of $U$.  That is, for at least one $P$, $\tilde{G}(\{\ket{\psi_{n,P}}\}, n)$ is within a factor $1/3$ of the optimal quadratic scaling.
\end{proof}

While it is not usually possible in reality to take the limit $r \rightarrow 0$, when $r$ is small, the $o(r)$ terms will be negligible.  We can expect the formulas to break down if $n^2 r$ is close to $1$ or if $p^2$ is about $\theta^2$.  Otherwise, the leading order $r$ behavior should be close to correct.

A maximally sensitive set of states with finite $S_n$ gives a natural protocol for detecting coherent errors in a family of channels, provided the channel is sufficiently close to the identity: Prepare the states in $S_n$ for a wide enough range of $n$s, apply the channel to all qubits, and look at the scaling with $n$ of the fidelity to the initial state for subsequences.  In the case of the GHZ states, look at subsequences all in the same bases; in the general case, it is sufficient to look at the subsequence consisting of the states with the highest infidelity for each $S_n$.

In the case of the three GHZ states, we can deduce additional information from the measured scaling.  The pre-factor of the quadratic scaling rate (if any) of the $P$-basis GHZ states tells us $\theta^2 v_P^2$.  Measured for all three bases, we learn $\theta^2$ and the unit vector $\vec{v}$, giving us a complete description of $U$, the coherent part of the channel.

\subsection{Measuring with Single-Qubit Measurements}
\label{sec:singlequbitmeasurement}

Even though we are currently working in an idealized model where there are no gate errors, it is inconvenient in practice to have to create multi-qubit entangled states and to make an entangled measurement.  Luckily, we can get the same information out about the accumulation order of the channel and a characterization of $U$ by making single-qubit measurements on the three GHZ states of our maximally sensitive set.

In particular, if we take a GHZ state and measure each qubit in the $X$ basis --- or equivalently perform Hadamard on each qubit and measure in the $Z$ basis --- the state in the absence of error will be a superposition of all even-weight bit strings.  Therefore, if we measure individual qubits but discard all information other than the parity of the classical output string, we have performed a rank-$2^{n-1}$ projective measurement with projectors $\Pi$, $I - \Pi$ plus additional information we have discarded.  Post-Hadamard, $\Pi = \sum_x \ketbra{x}{x}$, where the sum is over all even weight $x$.  Pre-Hadamard, this is equivalent to
\begin{equation}
\Pi_Z = \frac{1}{2} \sum_{x} (\ket{x} + \ket{\overline{x}}) (\bra{x} + \bra{\overline{x}}),
\end{equation}
where the sum is now over all $x$ and $\overline{x}$ is the bit string where all bits of $x$ have been flipped $0 \leftrightarrow 1$.  That is, the measurement distinguishes between the states in the subspace spanned by $\ket{x} + \ket{\overline{x}}$ from those in the subspace spanned by $\ket{x} - \ket{\overline{x}}$.

We can do this procedure for any of the three GHZ states, giving us three projective measurements $\Pi_P$ for the GHZ states $\ket{\psi_{n,P}}$.  For a state $\ket{\psi_{n,Z}}$, we do the $X$ basis measurement or the Hadamard and $Z$-basis measurement as described above.  For $\ket{\psi_{n,X}}$or $\ket{\psi_{n,Y}}$, we simply measure each qubit directly in the $Z$ basis.

For the basis $P$, one of the other Paulis $P'$ will take $\ket{0_P}$ to $\ket{1_P}$ and vice-versa.  For $P=Z$, $P' = X$, and for $P = X$ or $P=Y$, $P'=Z$.  The remaining Pauli $P''$ (i.e., $Y$, $X$, and $Y$ for the bases $P = X, Y, Z$ respectively) acts like $Y$ in the $Z$ basis, flipping the bit but also changing the phase.  Thus, a single $P'$ acting on $\ket{\psi_{n,P}}$ moves us within the subspace projected on by $\Pi_P$, whereas a single $P$ or $P''$ takes us out.   Let $\ket{\psi_{n,P}(j)} = P'_j \ket{\psi_{n,P}} = \frac{1}{\sqrt{2}} (\ket{e_{j,P}} + \ket{\overline{e}_{j,P}})$, where $e_j$ is the bit string with a $1$ in the $j$th bit and $0$s elsewhere.  Similarly, let $\ket{\psi_{n,P}(j,k)} = P'_j P'_k \ket{\psi_{n,P}}$.

Again starting from  eq.~\eqref{eqn:A0n}, we find
\begin{align}
\Pi_P A_0^{\otimes n} \ket{\psi_{n,P}} &= (1-np - n\theta^2/2)  \ket{\psi_{n,P}} + (i\theta v_{P'} + \Delta w_{P'}) \sum_j \ket{\psi_{n,P}(j)} - \\
& -\sum_{j < k} \theta^2 \left[v_P^2 \ket{\psi_{n,P}} + v_P v_{P''} (\ket{\psi_{n,P}(j)} + \ket{\psi_{n,P}(k)}) + (v_{P'}^2 + v_{P''}^2) \ket{\psi_{n,P}(j,k)} \right]. \nonumber
\end{align}
The probability squared of this outcome is therefore
\begin{equation}
(1-np - n\theta^2/2 - \theta^2 \binom{n}{2} v_P^2)^2 + n \theta^2 v_{P'}^2 + n [\Delta w_{P'} - \theta^2 (n-1) v_P v_{P''}]^2 + \binom{n}{2} \theta^4 (v_{P'}^2 + v_{P''}^2)^2.
\end{equation}
Discarding the lower-order terms, we find the probability squared of the $\Pi_P$ outcome is
\begin{equation}
1 - 2np - n\theta^2 (1-v_{P}^2 - v_{P'}^2) - n^2 \theta^2 v_P^2.
\end{equation}
Once again, the coefficient of the $n^2$ term tells us $\theta^2 v_P^2$, and running over $P = X,Y,Z$, we can deduce $\theta$ and $\vec{v}$.

Using this technique, we can identify coherent errors as follows:
\begin{prot}[Identifying coherent errors in a channel]
\label{prot:channel}
Choose a sequence of values for $n$.  They should span a large enough range to distinguish quadratic from linear behavior and the largest $n$ should satisfy $n^2 r \ll 1$.

\begin{enumerate}
\item For each value of $n$, create copies of $\ket{\psi_{n,X}}$, $\ket{\psi_{n,Y}}$, and $\ket{\psi_{n,Z}}$.
\item Run all $n$ qubits of each state through the channel $\C_r$.
\item Measure the qubits of $\ket{\psi_{n,X}}$ and $\ket{\psi_{n,Y}}$ in the $Z$ basis and measure the qubits of $\ket{\psi_{n,Z}}$ in the $X$ basis.
\item Count the outcome as an error if the outcome bit string for a particular state has odd parity.  Determine the error rate for each state.
\item Fit the error rates for different $n$ for each basis $P$ to a quadratic $a_P n^2 + b_P n + c_P$.
\item Let $\theta^2 = a_X + a_Y + a_Z$ and $v_P^2 = a_P / \theta^2$.  These are the parameters of the coherent error in the channel $\C_r$. 
\end{enumerate}
\end{prot}

\subsection{Using Fewer Qubits}
\label{sec:fewerqubits}

Another drawback of the protocol as presented so far is that it requires many qubits.  If we have the ability to measure and reset qubits during the computation, this can be substantially reduced and the protocol can be easily run using just two qubits.  This is because the GHZ states in our standard maximally sensitive set can be created via a straightforward circuit as in fig.~\ref{fig:GHZfull} which starts one qubit in the state $\ket{0} + \ket{1}$ and then does a sequence of CNOTs to each of the other $n-1$ qubits, which all start in the state $\ket{0}$.  Each qubit can then be rotated into the basis for the appropriate GHZ state.  Then the channel is applied, and finally, we rotate each individual qubit to the appropriate basis for measurement.

That means that except for the first qubit, each qubit experiences exactly one CNOT gate, followed by a sequence of single-qubit gates and measurement.  Therefore, we can perform the CNOT to the $i$th qubit, follow it by a single-qubit rotation to get the $i$th qubit into the right basis if necessary, apply the channel, rotate again to the measurement basis, and then measure the $i$th qubit, all without needing to do anything to the $j$th qubit for $j>i$.  That is, we can complete the measurement for the $i$th qubit before even starting to interact the $(i+1)$th qubit with the first qubit.  The procedures for qubits $2$ through $n$ can be done sequentially on a single physical qubit, recording each measurement and then resetting the physical qubit to $\ket{0}$, as in fig.~\ref{fig:GHZshrunk}.  The first qubit must maintain its coherence throughout this procedure.  At the end, once qubits $2$ through $n$ have all been entangled and measured, then we do the single-qubit operations for the first qubit: rotate to the appropriate basis for the GHZ state, apply the channel to it, rotate it to the measurement basis, and measure.  We calculate the parity of all $n$ measurement outcomes, the $n-1$ measurements on the second physical qubit as well as the final measurement on the first qubit.

While this version of the procedure is a bit more difficult to apply in a communications scenario where we are interested in channel noise, it is a very useful variant when we are interested in testing for coherent errors in gates, as in sec.~\ref{sec:testgates}.

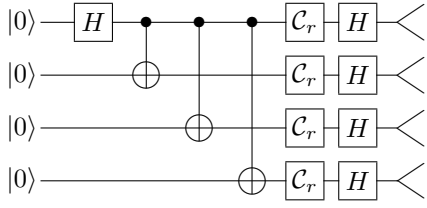
\begin{figure}
\begin{picture}(180,100)
\put(20,20){\line(1,0){93}}
\put(20,40){\line(1,0){93}}
\put(20,60){\line(1,0){93}}
\put(20,80){\line(1,0){13}}

\put(6,73){\makebox(14,14){$\ket{0}$}}
\put(6,53){\makebox(14,14){$\ket{0}$}}
\put(6,33){\makebox(14,14){$\ket{0}$}}
\put(6,13){\makebox(14,14){$\ket{0}$}}

\put(33,73){\framebox(14,14){$H$}}
\put(47,80){\line(1,0){66}}

\put(60,80){\circle*{4}}
\put(60,80){\line(0,-1){25}}
\put(60,60){\circle{10}}

\put(80,80){\circle*{4}}
\put(80,80){\line(0,-1){45}}
\put(80,40){\circle{10}}

\put(100,80){\circle*{4}}
\put(100,80){\line(0,-1){65}}
\put(100,20){\circle{10}}

\put(113,13){\framebox(14,14){$\C_r$}}
\put(113,33){\framebox(14,14){$\C_r$}}
\put(113,53){\framebox(14,14){$\C_r$}}
\put(113,73){\framebox(14,14){$\C_r$}}

\put(127,20){\line(1,0){6}}
\put(127,40){\line(1,0){6}}
\put(127,60){\line(1,0){6}}
\put(127,80){\line(1,0){6}}

\put(133,13){\framebox(14,14){$H$}}
\put(133,33){\framebox(14,14){$H$}}
\put(133,53){\framebox(14,14){$H$}}
\put(133,73){\framebox(14,14){$H$}}

\put(147,20){\line(1,0){8}}
\put(147,40){\line(1,0){8}}
\put(147,60){\line(1,0){8}}
\put(147,80){\line(1,0){8}}

\put(155,80){\line(3,2){10}}
\put(155,80){\line(3,-2){10}}
\put(155,60){\line(3,2){10}}
\put(155,60){\line(3,-2){10}}
\put(155,40){\line(3,2){10}}
\put(155,40){\line(3,-2){10}}
\put(155,20){\line(3,2){10}}
\put(155,20){\line(3,-2){10}}

\end{picture}
\caption{The protocol for the state $\ket{\psi_{4,Z}}$ using $4$ physical qubits}
\label{fig:GHZfull}
\end{figure}

\begin{figure}
\begin{picture}(350, 60)
% Start qubits 1 and 2
\put(20,40){\line(1,0){13}}
\put(20,20){\line(1,0){53}}

\put(6,33){\makebox(14,14){$\ket{0}$}}
\put(6,13){\makebox(14,14){$\ket{0}$}}

\put(33,33){\framebox(14,14){$H$}}
\put(47,40){\line(1,0){246}}

% Qubit 2
\put(60,40){\circle*{4}}
\put(60,40){\line(0,-1){25}}
\put(60,20){\circle{10}}

\put(73,13){\framebox(14,14){$\C_r$}}
\put(87,20){\line(1,0){6}}
\put(93,13){\framebox(14,14){$H$}}
\put(107,20){\line(1,0){8}}

\put(115,20){\line(3,2){10}}
\put(115,20){\line(3,-2){10}}

% Qubit 3
\put(155,20){\line(1,0){28}}
\put(141,13){\makebox(14,14){$\ket{0}$}}

\put(170,40){\circle*{4}}
\put(170,40){\line(0,-1){25}}
\put(170,20){\circle{10}}

\put(183,13){\framebox(14,14){$\C_r$}}
\put(197,20){\line(1,0){6}}
\put(203,13){\framebox(14,14){$H$}}
\put(217,20){\line(1,0){8}}

\put(225,20){\line(3,2){10}}
\put(225,20){\line(3,-2){10}}

% Qubit 4
\put(265,20){\line(1,0){28}}
\put(251,13){\makebox(14,14){$\ket{0}$}}

\put(280,40){\circle*{4}}
\put(280,40){\line(0,-1){25}}
\put(280,20){\circle{10}}

\put(293,13){\framebox(14,14){$\C_r$}}
\put(307,20){\line(1,0){6}}
\put(313,13){\framebox(14,14){$H$}}
\put(327,20){\line(1,0){8}}

\put(335,20){\line(3,2){10}}
\put(335,20){\line(3,-2){10}}

% End qubit 1
\put(293,33){\framebox(14,14){$\C_r$}}
\put(307,40){\line(1,0){6}}
\put(313,33){\framebox(14,14){$H$}}
\put(327,40){\line(1,0){8}}

\put(335,40){\line(3,2){10}}
\put(335,40){\line(3,-2){10}}

\end{picture}
\caption{The protocol for the state $\ket{\psi_{4,Z}}$ using $2$ physical qubits}
\label{fig:GHZshrunk}
\end{figure}
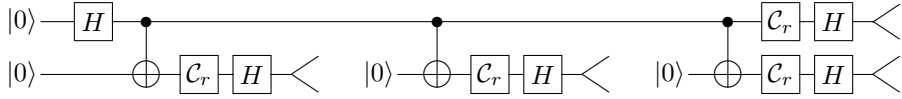

\section{Qudits}
\label{sec:qudits}

Defs.~\ref{def:accumulation} and \ref{def:mss} are phrased for qudits of arbitrary dimension $d$, but I have so far proven thms.~\ref{thm:accumulation} and \ref{thm:MSS} only for qubits.  The main complicating factor in going to qudits with $d > 2$ is that the description of $U$ and $D$ appearing through the singular value decomposition of $A_0$, eq.~\eqref{eq:svd}, is more complicated.

Before getting started, we pick some set of operators $\mathcal{P}$ to generalize the Paulis.  I will require that $P \in \mathcal{P}$ is unitary, that the operators in $\mathcal{P}$ form a basis for the space of $d \times d$ matrices, that $I \in \mathcal{P}$, and that $P \in \mathcal{P}$ is traceless if $P \neq I$.  One possible choice is the Heisenberg-Weyl operators $P = X^a Z^b$. $a, b = 0, \ldots d-1$,
\begin{align}
X \ket{j} &= \ket{j+1 \bmod d} \\
Z \ket{j} &= \omega^j \ket{j} \\
\omega &= e^{2\pi i/d}.
\end{align}
Another option if $d = 2^m$ is to treat the qudit as $m$ qubits and use the $m$-qubit Pauli group.  (This can be generalized to powers of odd primes as well.)  Since the $P$s form a basis for operators on the $d$-dimensional Hilbert space, for any $M$, we can write $M = \sum_P m_P P$.  If $M$ is Hermitian, $M = M^\dagger$, which implies that $m_P = m_{P^\dagger}^*$.  

Now returning to determining error accumulation, $D$ is Hermitian and $U$ is unitary, so we can write $U = \exp(i \theta H)$, where $H$ is Hermitian as well.  We let $H = \sum_P h_P P$ and $D = \sum_P d_P P$.  We can choose $\theta$ so that $\sum_P |h_P|^2 = 1$.  We can assume $h_I = 0$ since that merely contributes a global phase to $U$, which can be removed from $A_0$.  Also, let $d_I = 1-p$.   Then in order to have $A_0^\dagger A_0 \leq I$, it must be the case that all $d_P = O(p)$.   

Again expanding to leading order in each Pauli, we find
\begin{equation}
A_0 = (1 - p - \frac{\theta^2}{2}) I + \sum_{P \neq I} (i \theta h_P + d_P) P
\label{eq:A0qudit}
\end{equation}
and
\begin{equation}
A_0^{\otimes n} = (1 - np - \frac{n\theta^2}{2}) I + \sum_{P \neq I} \sum_{j} (i \theta h_P + d_P) P_j - \sum_{P, Q \neq I} \sum_{j<k} \theta^2 h_P h_Q P_j Q_k.
\label{eq:A0nqudit}
\end{equation}

For a single qudit, let $\ketbra{\psi}{\psi} = \sum_P c_P P$.  Since $P$ is traceless unless $P=I$, $c_I = 1/d$.  We have (discarding lower-order terms)
\begin{align}
\tr A_0 \ketbra{\psi}{\psi} &= 1 - p - \frac{\theta^2}{2} + d \sum_{P \neq I} (i \theta h_P + d_P) c_{P^\dagger} \\
\big|\tr A_0 \ketbra{\psi}{\psi} \big|^2 &= 1 - 2p - \theta^2 + d \sum_{P \neq I} [i \theta (h_P c_{P^\dagger} - h_{P^\dagger} c_P) + d_P c_{P^\dagger} + d_{P^\dagger} c_P] + \nonumber \\
& \qquad + d^2 \theta^2 \sum_{P,Q \neq I} h_P h_{Q^\dagger} c_{P^\dagger} c_Q \\
&= 1 - 2p - \theta^2  + 2 d \sum_{P \neq I} d_P c_{P^\dagger} + d^2 \theta^2 \sum_{P,Q \neq I} h_P h_{Q^\dagger} c_{P^\dagger} c_Q,
\label{eq:A0worstcasequdit}
\end{align}
since $\sum h_P c_{P^\dagger} = \sum h_{P^\dagger} c_P$ and $\sum d_P c_{P^\dagger} = \sum d_{P^\dagger} c_P$ by reordering the sums.  Thus, as for qubit channels, we must have at least one of $p$ and $\theta^2$ be $\Theta(r)$.

Let $P$ and $Q$ be non-commuting Heisenberg-Weyl operators which generate the Heisenberg-Weyl group, so any Heisenberg-Weyl operator can be written as $P^a Q^b$ up to a phase.  Let $\ket{\psi_{n,P,Q}}$ be the $n$-qudit stabilizer state with stabilizer generated by $P_j P^\dagger_{j+1}$ (for $j=1, \ldots, n-1$) and $Q^{\otimes n}$.  Then
\begin{align}
\big|\tr A_0^{\otimes n} \ketbra{\psi_{n,P,Q}}{\psi_{n,P,Q}} \big|^2 = 1 - 2np - n\theta^2  - 2 \binom{n}{2} \theta^2 \sum_{s=1}^{d-1} h_{P^s} h_{P^{-s}}.
\end{align}
Let $v_P = \sum_{s=1}^{d-1}  h_{P^s} h_{P^{-s}} = \sum |h_{P^s}|^2 \geq 0$.  We now wish to find a set of $S$ of $P$s such that there is always guaranteed to be a $v_P > 0$ for some $P$.

 Let $S$ be a set of Heisenberg-Weyl operators such that $\forall$ Heisenberg-Weyl operators $R$, there is an integer $s$ and a Heisenberg-Weyl operator $P \in S$ such that $R = P^s$.  For instance, for any $d$ we can take $S =\{ X^a Z^b\}$ for all $a, b$ provided both are not $0$. We have $\sum_{P \in S} v_P \geq \sum_{P \neq I} |h_P|^2 = 1$.  Thus, for some $P$, $v_P \geq 1/(d^2-1)$.   This shows that if $\theta^2$ is $\Theta(r)$, the family of channels has quadratic accumulation, and there will be at least one state $\ket{\psi_{n,P,Q}}$ with $P \in S$ that exhibits the quadratic accumulation.

Smaller sets $S$ always exist.  In particular, if $d$ is prime, $S = \{ X, Z, X^a Z\}$ for $a= 1, \ldots, d-1$ works.  In this case, $\sum_{P \in S} v_P = \sum_{P \neq I} |h_P|^2 = 1$.  The quadratic scaling of $\tilde{G}(\{\ket{\psi_{n,P,Q}}\}, n)$ is $v_P \theta^2$.  There must be at least one $v_P \geq 1/(d+1)$, so we get a tighter bound on the quadratic scaling.

For $d$ a power of $2$, we get a similar result using a different set of GHZ-like states.  The same can also be done for powers of other primes.  Let $B$ run over a complete set $S$ of mutually unbiased bases (MUBs)~\cite{MUB} for $m$ qubits, each of which is a stabilizer basis.  Let $\{M_a\}$, $a = 1, \ldots, m$ be a set of generators of the stabilizer for one state of the basis $B$ and let $Q_a$ be an $m$-qubit Pauli that anticommutes with $M_a$ and commutes with $M_b$ and $Q_b$ for $b \neq a$.  A set of such $Q_a$ always exists and can be found by solving the set of linear equations derived from these anticommutation constraints.  For convenience, we let $B$ also denote the abelian group generated by $\{M_a\}$, the stabilizer of the chosen state from the basis $B$; other states in $B$ have the same stabilizer with different eigenvalues.  Note that any Pauli on a single qudit will anticommute with at least one of the $M_a$ or $Q_a$.

Let $M_{a,j}$ be $M_a$ acting on the $j$th $d$-dimensional qudit.  Then let $\ket{\psi_{n,B}}$  be the $n$-qudit stabilizer state with generators $M_{a,j} M_{a,j+1}$ and $Q_a^{\otimes n}$ for $a=1, \ldots, m$ and $j = 1, \ldots, n-1$.  This state is equivalent to $m$ standard $n$-qubit GHZ states under a local transformation on each $d$-dimensional qudit.

Note that $\ket{\psi_{n,B}}$ may depend somewhat on the choice of $Q_a$, which are not completely unique.  If $Q_a$ anticommutes with $M_a$ and commutes with $M_b$, so does $Q_a M$ whenever $M$ is an element of $B$, and all Paulis that have the same commutation relations with the $M_a$s have this form.  If $n$ is even, all choices of $Q_a$ give the same stabilizer state $\ket{\psi_{n,B}}$.  If $n$ is odd, however, each choice of $\{Q_a\}$ gives a different state.  The choice of which state from the basis $B$ determines $\{M_a\}$ does not matter, nor does the choice of basis for $B$.

This time, we find that 
\begin{align}
\big|\tr A_0^{\otimes n} \ketbra{\psi_{n,P,Q}}{\psi_{n,P,Q}} \big|^2 = 1 - 2np - n\theta^2  - 2 \binom{n}{2} \theta^2 \sum_{M \in B} h_{M}^2.
\end{align}
Since $M$ is Hermitian in this case, $h_M$ is real.  Let $v_B = \sum h_M^2$.  For a complete set of stabilizer mutual unbiased bases, every Pauli $M$ is in some $B \in S$.  As $B$ runs over $S$, therefore, we get all non-trivial Paulis and $\sum_B v_B = 1$.  A complete set of mutually unbiased bases contains $d+1$ bases, so once again, at least one $v_P \geq 1/(d+1)$.

In short, we have shown the following theorem generalizing  thms.~\ref{thm:accumulation} and \ref{thm:MSS}:

\begin{thm}
\label{thm:quditaccumulation}
Let $\C_r$ be a family of quantum channels as before for qudits of dimension $d$, and suppose that $\C_r$ has order $a$ accumulation.  Then $a=1$ or $a= 2$.  Moreover, given two such families of channels $\C_r$ and $\C'_r$ with the same leading Kraus approximation $A_0$ for all $r$, then $\C_r$ and $\C'_r$ have the same order accumulation.  That is, the order depends only on the leading Kraus operator. 

Let $S_n = \{\ket{\psi_{n,P,Q}}\}$ with $P \in S$ and $Q$ any Heisenberg-Weyl operator that generates the Heisenberg-Weyl group with $P$; or when $d = 2^m$, let $S_n = \{\ket{\psi_{n,B}}\}$ with $B \in S$.  Then $\{S_n\}$ is a maximally sensitive set.  For at least one $P \in S$, the quadratic scaling of $\tilde{G}(\{\ket{\psi_{n,P,Q}}\}, n)$ or $\tilde{G}(\{\ket{\psi_{n,B}}\}, n)$ is at least $1/(d^2-1)$ of the maximum achievable.  For $d$ prime or a power of two, we can tighten this to at least $1/(d+1)$ of the maximum achievable for any state.
\end{thm}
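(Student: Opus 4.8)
The plan is to reprise the structure of the proof of thm.~\ref{thm:accumulation}, replacing the qubit singular-value decomposition with the qudit parametrization $A_0 = UD$, $U = \exp(i\theta H)$, that has already been set up, and with eqs.~\eqref{eq:A0qudit}--\eqref{eq:A0nqudit} in hand. First I would establish $a \geq 1$ exactly as before: feeding a product state $\ket{\phi}^{\otimes n}$ through $\C_r^{\otimes n}$ gives $F(n,r) \leq (1-r)^n$, so $G(n) \geq n$. For the upper bound I would invoke lemma~\ref{lemma:LKA}, which is already stated for arbitrary $d$, to justify the leading Kraus approximation: every $A_k$ with $k \neq 0$ has operator norm $O(\sqrt{r})$, so to first order in $r$ only terms with at most one non-identity Kraus index survive, and these contribute a purely linear-in-$n$ piece $O(nr)$. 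Everything beyond linear is therefore controlled by $A_0^{\otimes n}$ in eq.~\eqref{eq:A0nqudit}; in particular the order of accumulation depends only on $A_0$, which disposes of the ``same $A_0$ implies same order'' clause for free.

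Next I would classify the three contributions in $|\tr A_0^{\otimes n}\ketbra{\psi}{\psi}|^2$. The leading $I$ term is $1-\Theta(nr)$ for any state; the single-Pauli sums $\sum_j (i\theta h_P + d_P) P_j$ contribute at most $O(\theta n)$ from the imaginary part and $O(pn)$ from the real part, hence only linearly; and only the double-Pauli term $-\sum_{j<k}\theta^2 h_P h_Q P_j Q_k$ can produce an $n^2$ contribution, which is manifestly $O(\theta^2 n^2)$. This caps the order at $a \leq 2$, so $a \in \{1,2\}$. The single-qudit worst-case computation in eq.~\eqref{eq:A0worstcasequdit} shows that at least one of $p$ and $\theta^2$ is $\Theta(r)$: if only $p = \Theta(r)$ then $G(n) = \Theta(n)$ and the accumulation is linear, whereas if $\theta^2 = \Theta(r)$ the quadratic term switches on.

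The heart of the theorem is showing that the finite families $\{\ket{\psi_{n,P,Q}}\}$ (or $\{\ket{\psi_{n,B}}\}$ for $d = 2^m$) genuinely witness any quadratic accumulation, with the claimed constant fraction. The point is that these stabilizer states are engineered so that the single-Pauli expectations vanish, leaving only the double-Pauli term, which produces a clean quadratic signal $\binom{n}{2}\theta^2 v_P$ with $v_P = \sum_{s=1}^{d-1}|h_{P^s}|^2$ (respectively $v_B = \sum_{M \in B} h_M^2$). I would then combine this with the covering property built into $S$: because every nonidentity Heisenberg--Weyl operator is a power of some $P \in S$ (respectively lies in some basis $B \in S$), the identity $\sum_{P \in S} v_P \geq \sum_{P \neq I}|h_P|^2 = 1$ holds by the normalization $\sum_{P\neq I}|h_P|^2 = 1$. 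Hence whenever $\theta^2 = \Theta(r)$, some $v_P$ is positive, so at least one state in the set exhibits quadratic accumulation, giving maximal sensitivity. The constant fractions follow by counting the size of $S$: the generic choice $\{X^aZ^b\}$ has $d^2-1$ elements, forcing some $v_P \geq 1/(d^2-1)$, while for $d$ prime the $d+1$ cyclic ``lines'' and for $d = 2^m$ the $d+1$ stabilizer MUBs give $\sum_{P \in S} v_P = 1$ exactly, so some term is $\geq 1/(d+1)$.

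I expect the main obstacle to be the covering/partition argument for the tight $1/(d+1)$ bound, rather than the analytic estimates. Establishing $\sum_{P \in S} v_P = 1$ exactly requires that the chosen cyclic subgroups (for prime $d$) or stabilizer MUBs (for $d = 2^m$) partition the $d^2-1$ nonidentity Paulis into $d+1$ disjoint maximal sets with no overlap beyond the identity; for prime $d$ this is the statement that the nonzero points of the symplectic plane $\mathbb{Z}_d^2$ decompose into $d+1$ lines through the origin meeting only there. The accompanying subtleties are verifying that the stabilizer states $\ket{\psi_{n,P,Q}}$ (respectively $\ket{\psi_{n,B}}$) really annihilate all single-Pauli expectations, and that the anticommuting partner $Q$ (respectively each $Q_a$) always exists, which reduces to solvability of a linear system over the relevant field.
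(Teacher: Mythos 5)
Your proposal follows essentially the same route as the paper's own argument: the product-state lower bound and leading Kraus approximation via Lemma~\ref{lemma:LKA} to cap the order at two, the expansion in eqs.~\eqref{eq:A0qudit}--\eqref{eq:A0nqudit} with the dichotomy between $p$ and $\theta^2$ from eq.~\eqref{eq:A0worstcasequdit}, the stabilizer states $\ket{\psi_{n,P,Q}}$ (resp.\ $\ket{\psi_{n,B}}$) whose only surviving contribution is the two-site term $\binom{n}{2}\theta^2 v_P$, and the covering/counting argument $\sum_{P \in S} v_P \geq 1$ giving $1/(d^2-1)$ generically and $1/(d+1)$ for $d$ prime (disjoint cyclic subgroups) or $d=2^m$ (stabilizer MUBs). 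The subtleties you flag at the end (vanishing single-Pauli expectations, existence of the $Q_a$, disjointness of the lines for prime $d$) are exactly the points the paper handles, so the plan is sound and matches the paper's proof.
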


Once again, by measuring the quadratic scaling for all $d+1$ states, we can recover $\theta^2$ through the sum of the scalings.  We cannot fully reconstruct $U$, however, since we only learn the values $v_P$ or $v_B$.  It is likely that by adding more states to make a larger maximally sensitive set, we could gain full information about $U$.  David Poulin has pointed out~\cite{Poulin} that for the case $d= 2^m$, we could treat the channel as instead a single-qubit channel by tracing out $m-1$ of the qubits composing the qudit.  By applying the single-qubit protocol to different decompositions of the qudit into $m$ qubits, this should enable us to characterize the coherent error in the channel completely.

As with the qubit case, we can make single-qudit measurements to learn the scaling instead of entangled measurements.  If the state used is $\ket{\psi_{n,P,Q}}$, the accepting projector is created by measuring in the $Q$ basis and counting all results for which the dits of the output string add up to a multiple of $d$.  When $d$ is prime, this leads to the projector
\begin{equation}
\Pi_Q = \frac{1}{d} \sum_{s=0}^{d-1} (Q^s)^{\otimes n}.
\end{equation}
We calculate $\bra{\psi_{n,P,Q}} (A_0^\dagger)^{\otimes n} \Pi_Q A_0^{\otimes n} \ket{\psi_{n,P,Q}}$ and once again find that the leading order coefficient of $n^2$ is $\theta^2 v_P$.  For a state $\ket{\psi_{n,B}}$, we need to measure the eigenvalue of $Q_a$ for each qudit and then check that the number of $+1$ eigenvalues of $Q_a$ is even.  We must do this for all $a$, and this cannot generally be done with single-qubit measurements.  However, if we rotate each qudit so that $Q_a = Z_a$ for the $a$th qubit in the qudit, then a standard basis measurement of each qubit suffices.  Again, we find that the leading order coefficient of $n^2$ in the infidelity is $\theta^2 v_B$.

\section{Generalizing the Error Model}

In this section, I will consider more general error models than the original idealized one of perfect gates with the same noisy channel on each qubit.

\subsection{Different Channels for Different Qubits}
\label{sec:different}

In a real system, the channel will not be identical for all qubits.  If each qubit $j$ has a different $p_j$, $\Delta_j$, $\theta_j$, $\vec{v}_j$, and $\vec{w}_j$, then eq.~\eqref{eqn:A0n} is replaced by
\begin{equation}
A_0^{\otimes n} = 1 - \sum_j (p_j - \theta_j^2/2) I + \sum_j (i \theta_j \vec{v}_j \cdot \vec{\sigma}_j  + \Delta_j \vec{w}_j \cdot \vec{\sigma}_j ) - \sum_{j<k} \theta_j \theta_k (\vec{v}_j \cdot \vec{\sigma}_j) (\vec{v}_k \cdot \vec{\sigma}_k).
\end{equation}
For the three GHZ states $\ket{\psi_{n,P}}$, again only the last sum contributes to the quadratic scaling, giving
\begin{align}
\tilde{F}(\ket{\psi_{n,P}}, n,r) &= 1 - 2\sum_{j<k} \theta_j \theta_k v_{P,j} v_{P,k} + o(n^2 \theta^2) \\
&= 1 - \left(\sum_j \theta_j v_{P,j} \right)^2 + \sum_j \theta_j^2 v_{P,j}^2 + o(n^2 \theta^2).
\end{align}
The sum over $\theta_j^2 v_{P,j}^2$ only has $n$ terms and therefore cannot contribute to quadratic scaling.  If the $v_{P,j}$ vary wildly, and in particular if they often differ in sign, the sum $\sum \theta_j v_{P,j}$ will largely cancel and there will be no quadratic scaling at all.  If the $v_{P,j}$ and $\theta_j$ all have the same sign and do not generally decrease in size for large $n$, then there will be some quadratic scaling, with a coefficient given by the square of the average of $\theta_j v_{P,j}$.

Let $\theta = \frac{1}{n} \sum_j \theta_j$, $v_P = \frac{1}{n} \sum_j v_{P,j}$, $\theta_j = \theta + \delta_j$, $v_{P,j} = v_P + \eta_{P,j}$.  Then
\begin{equation}
\sum_j \theta_j v_{P,j} = n \theta v_P + \sum_j \delta_j \eta_{P,j},
\end{equation}
since $\sum \delta_j = \sum \eta_{P,j} = 0$.  When we measure the scaling for all 3 GHZ states, we learn this quantity for $P = X, Y, Z$.

Now, $\sum_P v_{P,j}^2 = 1$, so
\begin{equation}
n = \sum_{P,j} v_{P,j}^2 = n \sum_P v_P^2 + \sum_{P,j} \eta_{P,j}^2.
\end{equation}
This means that $\vec{v} = (v_X, v_Y, v_Z)$ need not be a unit vector,
\begin{equation}
|\vec{v}|^2 = 1 - \frac{1}{n} \sum_{P,j} \eta_{P,j}^2,
\end{equation}
but if all $\eta_{P,j}$ are small, it is close to a unit vector, up to the variance squared.  When $\delta_j$ is likewise small, we thus learn the average $\theta$ and $v_P$ up to second order in the perturbations.

\subsection{Noisy Gates}
\label{sec:gatenoise}

If the gates used to create the states are imperfect, there is the possibility that we could see a quadratic accumulation of error due to noise in the gates rather than the channel being tested.  In the most general case, if we use at least $n^2$ gates to create or measure the states in our maximally sensitive set, then any kind of error in the gates could lead to a $\Theta (n^2)$ scaling in the error rate measured, even without any coherent part to the noise.  This seems unavoidable if we insist on using states which require this many gates to create or measure.   

Luckily, the GHZ states from sec.~\ref{sec:maxsensitive} can be created using $O(n)$ gates, so this particular issue does not affect the protocol derived from them.  However, it is still the case that coherent noise in the gates used to create or measure the GHZ states could be confused with coherent noise in the channel.  In this section, I will discuss how to remove the ambiguity.

In particular, I will work in a model for which Pauli gates are essentially noiseless, meaning they have noise $o(r)$, so negligible in the limit $r \rightarrow 0$.  All other gates, state preparations, and measurements may have arbitrary noise that affects only the qubits involved in the gate.  It is actually sufficient for the noise on the Paulis to be gate-independent, so all Paulis experience the same noise process; in this case, the noise on the Paulis can be absorbed into noise on the other gates~\cite{RC}.

We model noisy gates as a unitary $G$ followed by a quantum channel $\D$.  $G$ is the ideal action of the gate in the absence of noise.  $\D$ is a channel that acts on the qubits affected by the gate, so in the case of a $2$-qubit gate, it is an arbitrary $2$-qubit channel.  In order to put this in the same framework as the rest of the discussion in this paper, we in fact imagine a family of noisy gates with the ideal gate $G$ followed by a quantum channel $\D_s$, where $s$ is now the worst-case fidelity of the gate.  If $s$ and $r$ scale at different rates, the whole discussion is moot --- if $s = o(r)$, the gate noise does not contribute at all to the leading order behavior of the system with $r$, whereas if $r = o(s)$, we have little hope of seeing the noise in the channel over the noise in the gates.  Therefore, we assume $s = \Theta(r)$.  For simplicity, I will assume that all gates have the same noise $\D_s$, but this is not necessary except for notational convenience.

We can analyze $\D_s$ in a similar way to how we analyzed $\C_r$ in thm.~\ref{thm:accumulation} and sec.~\ref{sec:qudits}.  We look at the first Kraus operator $B_0$ of $\D_s$ and decompose into a coherent unitary part $\exp(i \omega K)$ and a Hermitian part.  A similar analysis shows that, when there are $O(n)$ gates, there is no possibility that $\D_s$ can make a quadratic contribution to the error rate measured in the system unless $\omega^2 = \Theta(r)$.  However, when $\D_s$ does have a coherent part of the requisite size, there is at least the possibility that it will make a quadratic contribution and thus be potentially confused with coherent error in the channel.  Whether or not this actually does happen will depend on the interaction of the noise with the circuit being performed.

In order to prevent this from happening, we will modify the circuit in order to avoid any quadratic accumulation of noise from the gates.  Of course, we must be careful to do so in a way that still allows errors from the channel to accumulate quadratically.

The approach will be based on randomized compiling~\cite{RC}.  We take advantage of the fact that the only gates we need to encode or measure a GHZ state (in any of the three bases) are Clifford group gates, which have the property that they conjugate Pauli operators into other Pauli operators.  In particular, imagine we could perform an independent uniformly random two-qubit Pauli operator $P$ before and after each gate noise step.  Instead of noise $\D_s$, we would have $\D'_s = 1/16 \sum_P P \D_s P$.  $\D'_s$ is a stochastic channel with first Kraus operator $B_0 = q I$.  (All other Kraus operators are non-trivial Paulis.)  It will never cause quadratic accumulation of errors from $O(n)$ gates.  In the limit of $s \rightarrow 0$, $\D'_s \rightarrow I$, so the Paulis have no effect on the gate when there is no error.

Of course, we cannot insert the Pauli $P$ between the gate $G$ and the noise, since the separation between them is just a mathematical fiction.  However, $P G = G Q$, where $Q = G^\dagger P G$ is another Pauli operator (on two qubits if $G$ is a two-qubit gate).  Therefore, if we perform $Q$ before the gate and $P$ after the gate and noise, we have converted the noise in the gate into stochastic noise.  If we do this for every noisy gate (using independently chosen $P$ and $Q$), all the gate noise becomes of a form that cannot interfere with the quadratic accumulation signal we are trying to see.  We can simplify the resulting circuits slightly by combining sequential Paulis.

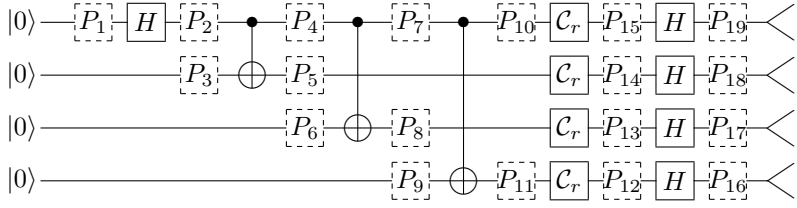
\begin{figure}
\begin{picture}(320,100)
\put(20,20){\line(1,0){133}}
\put(20,40){\line(1,0){93}}
\put(20,60){\line(1,0){53}}
\put(20,80){\line(1,0){13}}

\put(6,73){\makebox(14,14){$\ket{0}$}}
\put(6,53){\makebox(14,14){$\ket{0}$}}
\put(6,33){\makebox(14,14){$\ket{0}$}}
\put(6,13){\makebox(14,14){$\ket{0}$}}

\put(33,73){\dashbox{2}(14,14){$P_1$}}
\put(47,80){\line(1,0){6}}

\put(53,73){\framebox(14,14){$H$}}
\put(67,80){\line(1,0){6}}

\put(73,73){\dashbox{2}(14,14){$P_2$}}
\put(73,53){\dashbox{2}(14,14){$P_3$}}
\put(87,80){\line(1,0){26}}
\put(87,60){\line(1,0){26}}

\put(100,80){\circle*{4}}
\put(100,80){\line(0,-1){25}}
\put(100,60){\circle{10}}

\put(113,73){\dashbox{2}(14,14){$P_4$}}
\put(113,53){\dashbox{2}(14,14){$P_5$}}
\put(113,33){\dashbox{2}(14,14){$P_6$}}
\put(127,80){\line(1,0){26}}
\put(127,60){\line(1,0){86}}
\put(127,40){\line(1,0){26}}

\put(140,80){\circle*{4}}
\put(140,80){\line(0,-1){45}}
\put(140,40){\circle{10}}

\put(153,73){\dashbox{2}(14,14){$P_7$}}
\put(153,33){\dashbox{2}(14,14){$P_8$}}
\put(153,13){\dashbox{2}(14,14){$P_9$}}
\put(167,80){\line(1,0){26}}
\put(167,40){\line(1,0){46}}
\put(167,20){\line(1,0){26}}

\put(180,80){\circle*{4}}
\put(180,80){\line(0,-1){65}}
\put(180,20){\circle{10}}

\put(193,73){\dashbox{2}(14,14){$P_{10}$}}
\put(193,13){\dashbox{2}(14,14){$P_{11}$}}
\put(207,80){\line(1,0){6}}
\put(207,20){\line(1,0){6}}

\put(213,13){\framebox(14,14){$\C_r$}}
\put(213,33){\framebox(14,14){$\C_r$}}
\put(213,53){\framebox(14,14){$\C_r$}}
\put(213,73){\framebox(14,14){$\C_r$}}

\put(227,20){\line(1,0){6}}
\put(227,40){\line(1,0){6}}
\put(227,60){\line(1,0){6}}
\put(227,80){\line(1,0){6}}

\put(233,13){\dashbox{2}(14,14){$P_{12}$}}
\put(233,33){\dashbox{2}(14,14){$P_{13}$}}
\put(233,53){\dashbox{2}(14,14){$P_{14}$}}
\put(233,73){\dashbox{2}(14,14){$P_{15}$}}

\put(247,20){\line(1,0){6}}
\put(247,40){\line(1,0){6}}
\put(247,60){\line(1,0){6}}
\put(247,80){\line(1,0){6}}

\put(253,13){\framebox(14,14){$H$}}
\put(253,33){\framebox(14,14){$H$}}
\put(253,53){\framebox(14,14){$H$}}
\put(253,73){\framebox(14,14){$H$}}

\put(267,20){\line(1,0){6}}
\put(267,40){\line(1,0){6}}
\put(267,60){\line(1,0){6}}
\put(267,80){\line(1,0){6}}

\put(273,13){\dashbox{2}(14,14){$P_{16}$}}
\put(273,33){\dashbox{2}(14,14){$P_{17}$}}
\put(273,53){\dashbox{2}(14,14){$P_{18}$}}
\put(273,73){\dashbox{2}(14,14){$P_{19}$}}

\put(287,20){\line(1,0){8}}
\put(287,40){\line(1,0){8}}
\put(287,60){\line(1,0){8}}
\put(287,80){\line(1,0){8}}

\put(295,80){\line(3,2){10}}
\put(295,80){\line(3,-2){10}}
\put(295,60){\line(3,2){10}}
\put(295,60){\line(3,-2){10}}
\put(295,40){\line(3,2){10}}
\put(295,40){\line(3,-2){10}}
\put(295,20){\line(3,2){10}}
\put(295,20){\line(3,-2){10}}

\end{picture}
\caption{The protocol for the state $\ket{\psi_{4,Z}}$ using randomized compiling to convert all errors except those in the channel $\C_r$ into stochastic errors.  The Paulis $P_i$ (in dashed boxes to distinguish them from the original gates of the circuit) combine a random Pauli for the preceding gate with a random Pauli for the subsequent gate conjugated by that gate.  For example, if the random Pauli for the preparation of the second qubit is $Z$ and the random Pauli for the first CNOT gate is $X \otimes X$, then $P_3 = Z$ and $P_5 = X$.}
\label{fig:GHZtwirled}
\end{figure}

One small but important difference from regular randomized compiling is that after creating the appropriate state from our maximally sensitive set, the Pauli frame must return to the identity.  That is, in the absence of noise, the circuit should create the actual desired state from the maximally sensitive set, not some Pauli twirled version of it.  In particular, we don't twirl the channel $\C_r$.  Normally randomized compiling would be applied to the full circuit, and any intermediate state would be therefore be altered from its unrandomized version.

In the case of the GHZ states, it would be acceptable to have $\ket{00 \ldots 0} - \ket{11 \ldots 1}$ in whichever basis instead of the $+$ version, but $\ket{x} + \ket{\overline{x}}$ for $x \neq 00 \ldots 0$ would not work.  Instead of seeing an error rate proportional to $n^2$, the rotations on the $1$ bits of $x$ would cancel with the rotations on the $0$ bits of $x$, and the signal would be reduced to something proportional to $(2 \mathrm{wt}(x) - n)^2$.  In the typical case where $\mathrm{wt}(x) = n \pm O(\sqrt{n})$, this means the scaling of errors would just be linear in $n$; we would have eliminated the signal along with the extraneous gate noise.  This is why it is important to have all Paulis cancel before applying the channel.

We twirl all gates used in the circuit for encoding or measurement. We also must twirl state preparation and measurement, but that is a bit more subtle since we can only insert a Pauli after the state preparation but not before, and before the measurement but not after.  For measurement in the standard basis, we could still insert a random Pauli as before and compensate by classically reversing the measurement outcome if the Pauli before the measurement was an $X$ or a $Y$.  Note that a $Z$ just before the measurement would have no effect on the measurement outcome if the measurement were perfect, but could conceivably alter the result when there is noise in the measurement.

Twirling out the errors in the state preparation requires a bit more care.  One option is symmetric to the way we handle measurements: put a random Pauli after the state preparation and compensate by preparing a $\ket{1}$ instead of a $\ket{0}$ (or vice-versa) if the Pauli is an $X$ or a $Y$.  However, for this to work, it requires that preparation of both $\ket{0}$ and $\ket{1}$ can be modeled by an ideal state preparation followed by the same noise channel $\D_s$ for both states.  This is an additional non-trivial assumption.

Note, though, that phase errors after an ideal state preparation of a basis state have no physical effect.  This means that we can ignore them and focus instead on twirling the bit flip errors, which only requires a randomly chosen $I$ or $Z$.  More precisely, suppose that instead of preparing the state $\ket{0}$, the noisy state preparation procedure prepares $\rho$.  If we randomly apply $Z$ $50\%$ of the time to this state, the resulting state is
\begin{equation}
\rho' = \frac{1}{2} \rho + \frac{1}{2} Z \rho Z = \rho_{00} \ketbra{0}{0} + \rho_{11} \ketbra{1}{1}.
\end{equation}
That is, the state is a mixture of the correct $\ket{0}$ state with probability $\rho_{00}$ and the state with a bit flip $\ket{1}$ with probability $\rho_{11}$.  This error has no coherent part and cannot contribute to a quadratic accumulation of error.  In fact, essentially the same argument applies to measurements: We do not need to twirl with $X$ and $Y$, only with $I$ and $Z$.  

The upshot is that provided the noise in the Pauli gates is $o(r)$ and the noise in all other circuit components is $O(r)$ and fits the other assumptions of the error model, we can perform the GHZ maximally sensitive states protocol described in sec.~\ref{sec:maxsensitive} with the addition of Pauli twirling, and if we observe a quadratic scaling of noise rate with $n$, the number of qubits, that indicates that there are coherent errors in the channel.  The coefficients of the scaling for the three bases tell us the parameters of the coherent part of the error, as before.

\section{Testing for Coherent Errors in Gates and Measurements}
\label{sec:testgates}

\subsection{Testing Gates}

The protocol can be extended to test for coherent errors in certain kinds of quantum gates, with potential additional assumptions on the noise.  Suppose we wish to test one type of gate for coherent errors.  We again model the noisy gate as a perfect unitary $F$ followed by one member of a family of quantum channels $\C_r$.  The basic idea is that instead of creating the states $\ket{\psi}$ from the maximally sensitive set, we create the states $(F^\dagger)^{\otimes n} \ket{\psi}$, then apply $F^{\otimes n}$ to get $\ket{\psi}$ and measure the fidelity to $\ket{\psi}$ as before.

Clearly, it only makes sense to do this in a situation where we allow gate errors.  We use the model and procedure in sec.~\ref{sec:gatenoise}.  That is, the assumption is that gate noise is Markovian and specific to the qubits involved in the gate, that the noise in Pauli operations is negligible, and that errors in other gates are $O(r)$, with $r$ the worst-case fidelity of the gate being tested.

When $F$ is a single-qubit Clifford group gate, the twirling procedure from sec.~\ref{sec:gatenoise} works without any further alteration.  Indeed, we can create one of the GHZ states, twirling the initial state preparations and gates involved in the circuit as before, then perform $F^\dagger$ on each qubit, twirling it as well.  Since $F^\dagger$ is also in the Clifford group, this still requires only Pauli operations before and after.  Then we apply $F$ to each qubit \emph{without} twirling.  Since we model noisy gates as the perfect gate followed by a channel, the effect is the same as applying the channel directly on the relevant GHZ state.  Then the measurement and analysis is the same as for testing for coherent noise in a channel.

We get the following protocol:
\begin{prot}[Identifying coherent errors in a single-qubit Clifford group gate $F$]
Pick a range of values of $n$, as in protocol~\ref{prot:channel}.
\begin{enumerate}
\item For each value of $n$, use randomized compiling to create copies of $\ket{\psi_{n,X}}$, $\ket{\psi_{n,Y}}$, and $\ket{\psi_{n,Z}}$.
\item Apply $F^\dagger$ twirled to each qubit.  (I.e., choose a random Pauli $P$ and apply $P F^\dagger Q$, where $Q = FPF^\dagger$.)
\item Apply $F$ to each qubit without twirling it.
\item Measure the qubits of $\ket{\psi_{n,X}}$ and $\ket{\psi_{n,Y}}$ in the $Z$ basis and measure the qubits of $\ket{\psi_{n,Z}}$ in the $X$ basis.
\item Count the outcome as an error if the outcome bit string for a particular state has odd parity.  Determine the error rate for each state.
\item Fit the error rates for different $n$ for each basis $P$ to a quadratic $a_P n^2 + b_P n + c_P$.
\item Let $\theta^2 = a_X + a_Y + a_Z$ and $v_P^2 = a_P / \theta^2$.  These are the parameters of the coherent error in the noise channel for $F$. 
\end{enumerate}
\end{prot}

When $F$ is a non-Clifford group gate, an additional assumption is needed.  To twirl, we need to perform $Q = F P F^\dagger$ for a random Pauli $P$, and this may not be a Pauli operator, and therefore our current noise model allows it to have a significant level of coherent noise.  While $Q$ will not be needed on every qubit, non-Clifford group gates will be needed for at least $1/2$ of the qubits, which is enough to potentially see a quadratic scaling with $n$ of coherent noise.  Since $(F^\dagger)^{\otimes n} \ket{\psi_{n,P}}$ may not be a stabilizer state in this case, it cannot be created using only Clifford group gates.

Therefore, we will make a further assumption on the noise model.  Suppose $F$ is a single-qubit $C_3$ gate such as the $\pi/8$ phase rotation $\mathrm{diag}(e^{-i\pi/8}, e^{i\pi/8})$, which means that $F P F^\dagger$ is in the Clifford group whenever $P$ is in the Pauli group.  For the twirl, we therefore only need to add Clifford group gates to the ideal circuit creating $(F^\dagger)^{\otimes n} \ket{\psi_{n,P}}$.  However, we must be able to perform the new Clifford group gates with negligible noise as well as the Paulis.  For instance, when $F$ is the $\pi/8$ phase rotation, then the twirl sometimes uses the $\pi/4$ phase rotation $F^2$, which is a Clifford group gate.  More generally, we can test the non-Clifford gate $F$ reliably if the gates $F P F^\dagger$ have low noise for all Paulis $P$, with ``low'' meaning $o(r)$.  If we have at least one such gate, then together with the Clifford group, it forms a universal set of gates~\cite{NRS} and we can make the needed states $(F^\dagger)^{\otimes n} \ket{\psi_{n,P}}$ by approximating $F^\dagger$ with our universal set.

Next, we consider two-qubit gates.  To fully test them, we use a maximally sensitive set of states not for qubits but for $4$-dimensional qudits.  To test $F$, we again wish to create the state $(F^\dagger)^{\otimes n} \ket{\psi}$, now treating each $F$ gate as a single-qudit gate by pairing the two qubits into a qudit.  If $F$ is a Clifford group gate for qubits (such as the CNOT gate), it makes the most sense to use the GHZ states $\ket{\psi_{n,B}}$ for $B$ running over some set of MUBs.  Then everything else works the same as for testing a single-qubit gate.

\subsection{Testing Measurements}

Testing measurements is a bit more subtle.  A noisy single-qubit measurement can be written as a $2$-outcome POVM: $\{M_0, M_1\}$.  Each $M_i$ is positive and $M_0 + M_1 = I$.  We can diagonalize $M_0$, which makes $M_1$ diagonal as well.  Let $U$ be the change of basis; then $M_0 = U D U^\dagger$, $M_1 = U (I-D) U^\dagger$.  Here $D$ is diagonal with its two entries in $[0,1]$ to make sure both $M_0$ and $M_1$ are positive.

The $D$ factor is a probability that the measurement outcome will be wrong; it can only ever accumulate linearly.  The $U$ is a change of basis for the measurement and can create quadratic accumulation.  Our goal is to measure that term.  Let us consider what happens if we use the protocol based on the set of three GHZ states.  When we prepare $\ket{\psi_{n,X}}$ or $\ket{\psi_{n,Y}}$, the next thing to do is to measure each qubit in the $Z$ basis.  Thus, the change of basis for the measurement is equivalent to applying $U$ to each individual qubit of the GHZ state.  However, when we prepare $\ket{\psi_{n,Z}}$, we do a Hadamard on each qubit before measuring.  The change of measurement basis is now equivalent to $HUH$ on each qubit of the GHZ state.  When a different channel applies for different states in the maximally sensitive set, the protocol may not work.  In particular, the states $\ket{\psi_{n,X}}$ or $\ket{\psi_{n,Y}}$ measure $v_X$ and $v_Y$ for $U$, but $\ket{\psi_{n,Z}}$ measures $v_Z$ for $HUH$, which is the same as $v_X$ for $U$.  Thus, we have redundant information about $v_X$ and none about $v_Z$.

However, notice that the effect of $v_Z$ is to perform a phase rotation, which will have no effect on the measurement; we could absorb it into $D$ and not change either $D$ or $M_i$ at all.  Thus, we might as well assume that $v_Z = 0$.  Measuring $\ket{\psi_{n,X}}$ and $\ket{\psi_{n,Y}}$ is then sufficient to completely characterize $U$.

To be more precise, here is the protocol: 
\begin{prot}[Identifying coherent errors in a single-qubit measurement]
Pick a range of values of $n$, as in protocol~\ref{prot:channel}.
\begin{enumerate}
\item For each value of $n$, use randomized compiling to create copies of $\ket{\psi_{n,X}}$ and $\ket{\psi_{n,Y}}$.
\item Measure the qubits of each state in the $Z$ basis.
\item Count the outcome as an error if the outcome bit string for a particular state has odd parity.  Determine the error rate for each state.
\item Fit the error rates for different $n$ for each basis $P$ to a quadratic $a_P n^2 + b_P n + c_P$.
\item Let $\theta^2 = a_X + a_Y$ and $v_P^2 = a_P / \theta^2$ for $P = X, Y$.  Set $v_Z = 0$.  These are the parameters of the coherent error in the measurement. 
\end{enumerate}
\end{prot}

\subsection{State Preparation}

State preparation cannot be tested for coherent errors in this approach.  Certainly, once the GHZ state is created, there is no need or scope for further state preparation, but the difficulty is more fundamental: state preparation can never have quadratic accumulation by itself.

To see this, we model the noisy state preparation as preparing a state $\rho$ instead of a state $\ket{0}$.  We can diagonalize $\rho = U \sigma U^\dagger$.  This is equivalent to creating the mixed state $\sigma$, a mixture of $0$ and $1$ that cannot cause quadratic error accumulation, followed by the unitary rotation $U$.  The $U$ is a coherent error, and it is there we would be looking for quadratic accumulation.

The problem is that when we prepare $n$ qubits $\rho^{\otimes n}$, the fidelity to the ideal state $\ket{0 \ldots 0}$ is $(1-r)^n \approx 1-nr$.  Any measurement cannot distinguish the faulty state from the ideal state with probability greater than $nr + o(r)$.  This is just linear accumulation.

This is not to say that state preparation cannot contribute to quadratic accumulation of errors or that coherent errors in state preparation are meaningless.  For instance, if we prepare the state $\rho$ and then wait $n-1$ time steps, during each of which there is also an error $U$, we would have an overall state $U^n \sigma U^n$.  With our standard parametrization of $U$, that leads to an infidelity of roughly $n^2 \theta^2$.  Note that this is $n^2$ and not $(n-1)^2$ --- the $U$ from the state preparation does contribute to the quadratic accumulation in this case.  However, the bulk of the coherent error comes from the channel errors during the waiting, not the state preparation.

\section{Accumulation in More General Circuits}
\label{sec:circuits}

The example of state preparation shows that we need to consider linear versus quadratic accumulation in more general contexts than simply applying $\C_r^{\otimes n}$ to entangled states.  In particular, one might ask if there are channels that do not show quadratic accumulation when applied in parallel but that do show quadratic accumulation when applied sequentially or as part of a larger circuit.  One might also wonder if there are channels that can accumulate errors rapidly in conjunction with other different channels even though they do not do so with multiple copies of themselves.

To answer both of these questions, we need to define accumulation of channels in more general circuits.  In order to have this be sensible, we still need to deal with families of channels parametrized by worst-case fidelity $r$ and work with the limit $r \rightarrow 0$; only then do we take the limit of $n \rightarrow \infty$, where $n$ is now the number of channels (possibly different channels) used in a possibly large circuit.  This means we will need an infinite family of circuits, and in order to talk about the interaction of different channels, we need an infinite family of channels, all of which scale in roughly the same way.

For the purposes of this section only, I will refer to the definition of accumulation from sec.~\ref{sec:quadratic} as \emph{accumulation in parallel}.  I will show that this can occur for the same channels as accumulation in the more general context and therefore there is actually no need to distinguish the cases.

\begin{dfn}
\label{dfn:sequentialaccumulation}
A \emph{nested family of circuits} $\{Q_n\}$ is a sequence of quantum circuits for which the circuit $Q_n$ begins with the initial state $\ket{0 \ldots 0}$ for $m_n$ qubits, then performs the sequence of unitaries $U_0, U_1, U_2, \ldots, U_n$ in order.  In particular, $U_q$ is the same for all circuits $Q_n$ in the family for $n \geq q$.  Each $U_q$ is unitary and may act on any finite (but potentially unbounded) number of qubits.  Let $\ket{\psi_n}$ be the final state produced by the circuit $Q_n$, so $\ket{\psi_0} = U_0 \ket{0 \ldots 0}$ and $\ket{\psi_n} = U_n \ket{\psi_{n-1}}$.

Let $\{\C_{j,r}\}$ be a sequence of quantum channels analytic in $r$, with $r_j(r)$ the worst-case fidelity of $\C_{j,r}$ and $r = \sup_j {r_j(r)}$.  We assume that $r_j = \Theta(r)$ for all $j$.  The channels may act on multiple qubits at a time, but we assume there is some constant ($n$ and $r$-independent) bound on the number of qubits they act on, usually $2$.

Let $\{B_i\}$ be a sequence of linear operations on density matrices.  These might be CPTP maps or not.  They might be presented as linear operators $A$ on Hilbert space, which can also be interpreted as linear operations on density matrices $\rho \mapsto A \rho A^\dagger$; this should be assumed.  Let $q_i$ be non-negative integers with $q_i < q_{i+1}$.  Then $Q_n (B_1, q_1; B_2, q_2; \ldots; B_j, q_j)$ is the noisy circuit created from $Q_n$ by inserting $B_i$ after $U_{q_i}$.  That is, we get this noisy circuit by starting with the state $\ket{\psi_{q_1}}$, then performing $B_1$, then the unitaries $U_{q_1 + 1}, \ldots , U_{q_2}$, then $B_2$, and so on through $B_j$, and finishing off with $U_{q_j + 1}$ through $U_n$.  Let $\rho_n (B_1, q_1; B_2, q_2; \ldots; B_j, q_j)$ be the state produced by $Q_n (B_1, q_1; B_2, q_2; \ldots; B_j, q_j)$.   When all $B_j$ are linear operators on Hilbert space, the state is a pure state $\ket{\phi_n (B_1, q_1; B_2, q_2; \ldots; B_j, q_j)}$.  

Let $\rho_{n,r} = \rho_n (\C_{0,r}, 0; \C_{1,r},1; \ldots; C_{n, r}, n)$.  That is, $\rho_{0,r} = \C_{0,r} (\ketbra{\psi_0}{\psi_0})$, the state after $Q_0$ with channel $\C_{0,r}$, and $\rho_{n,r} = \C_{n,r} (U_n \rho_{n-1,r} U_n^\dagger)$, the state after $Q_n$ with channels $C_{j,r}$.

Let 
\begin{equation}
F (n,r) = \bra{\psi_n} \rho_{n,r} \ket{\psi_n}
\end{equation}
be the fidelity between the state produced by the noisy circuit  $Q_n (\C_{0,r}, 0; \C_{1,r},1; \ldots; \C_{n, r}, n)$ and the ideal circuit $Q_r$.  The dependence on the circuits and channels is suppressed here.  Let $G(n) = \lim_{r \rightarrow 0} (1-F(n,r))/r$ if the limit exists.  Then we say that \emph{the sequence of channels $\{\C_{j,r}\}$ has order $a$ accumulation in the circuit family $\{Q_n\}$} if $G(n) = \Theta(n^a)$.
\end{dfn}

This definition is similar to the definition of accumulation in parallel, but does not quite subsume it in general, since we might have a sequence of states $\ket{\psi_n}$ achieving quadratic accumulation, but which cannot be produced by a nested circuit family.  It does, however, include the GHZ states.  To see this, let us consider the following circuit family: $m_n = n+2$.  $Q_0$ creates the EPR pair $\ket{00} + \ket{11}$, and $U_n$ is a CNOT gate from qubit $n$ to qubit $n+1$.  Thus, $\ket{\psi_n}$ is the $(n+2)$-qubit GHZ state.  Let $\C_{n,r}$ be the channel $\C_r$ acting on qubit $n$.  Then $\rho_{n,r}$ is what we get from $\C_r^{\otimes (n+1)}$ acting on the $(n+2)$-qubit GHZ state, which is what we considered in sec.~\ref{sec:maxsensitive} (up to one copy of $\C_r$, which will not affect the order of accumulation).

While the definition assumes that the circuits $Q_n$ are unitary, it also covers circuits including non-unitary channels and measurements, since we can purify those to make a larger unitary circuit.  The unitaries $U_n$ can act on an unbounded number of qubits partially to accommodate this and partially to allow many gates between insertions of the channels.  Note that this also means the definition can encompass cases where there are other sources of noise present and we wish to define the accumulation of errors from a particular sequence of channels relative to the background.

It may be that a particular nested family of circuits prevents the channels from adding fully coherently; this could result in accumulation somewhere strictly between $1$ and $2$, or it could be even less than $1$ in some cases.  It is also possible that the accumulation will not be well-defined because the limit for $G(n)$ is not well-defined.
However, it will turn out that the accumulation can never be greater than $2$, and indeed, if we take the maximum accumulation achievable over all nested families of circuits, that maximum will always be either $1$ or $2$, as for accumulation in parallel.

Suppose we have a circuit family and sequence of channels as above such that the accumulation is defined.  In this context, it may be that some of the channels contribute to the leading order accumulation and some do not, for instance because they are effectively twirled by the circuit relative to the other channels.  We can analyze this by looking at the fidelity to the states of the perfect circuit family when that particular channel is present versus when it is not:
\begin{dfn}
\label{dfn:contributesaccumulation}
Let $\{Q_n\}$, $\{\C_{j,r}\}$, and $G(n)$ be as in def.~\ref{dfn:sequentialaccumulation}.  Let $\rho_{n,r}^j$ be the state produced by the noisy circuit $Q_n (\C_{0,r}, 0; \C_{1,r},1; \ldots; \C_{j-1,r}, j-1; \C_{j+1,r}, j+1; \ldots; \C_{n, r}, n)$, that is the noisy circuit containing all channels except for $\C_{j,r}$.  Let
\begin{equation}
F^j (n,r) =  \bra{\psi_n} \rho_{n,r}^j \ket{\psi_n}
\end{equation}
and $G^j (n) = \lim_{r \rightarrow 0} (1-F^j(n,r))/r$ if the limit exists.  If both $G(n)$ and $G^j(n)$ are well-defined, let $\Delta_j (n) = G(n) - G^j(n)$.  When $\Delta_j(n) = \Theta(n^b)$, then the channel family $\C_{j,r}$ \emph{contributes order $b$ coherent accumulation with $\{\C_{j,r}\}$ in the circuit family $\{Q_n\}$}.
\end{dfn}
Equivalently, we could look at 
\begin{equation}
\delta F^j (n,r) = \bra{\psi_n} (\rho_{n,r}^j - \rho_{n,r}) \ket{\psi_n}
\end{equation}
and then let $\Delta_j(n) = \lim_{r \rightarrow 0} \delta F^j(n,r)/r$.

Since $G^j (n)$ involves one fewer channel than $G(n)$, in the case where all channels are contributing equally to the accumulation of noise, then $G^j(n)$ should be equal to $G(n-1)$.  When $G(n) = Cn^a$, that means that $\Delta_j (n)$ would be $O(n^{a-1})$.  It might be, however, that this particular channel contributes more or less noise accumulation than the other channels, and this definition helps to identify that.  It could also be that this particular channel \emph{decreases} the noise, for instance via a coherent rotation in the opposite direction from the other channels.  Then $\Delta_j(n)$ will be negative.  In the case where the overall accumulation is quadratic, the intution is that most pairs of channels will constructively interfere to produce a rapid divergence from the noiseless state.  The channel $\C_{j,r}$ contributes to this if it constructively interferes with a constant fraction of the other channels.  Conversely, if it only constructively interferes with a constant number of other channels, or with none at all, then it only contributes a constant amount of coherent accumulation.

\begin{thm}
\label{thm:sequential}
If a channel family $\C_{j,r}$ contributes order $b$ coherent accumulation with $\{\C_{j,r}\}$ in the circuit family $\{Q_n\}$ with $b > 0$, then $b \leq 1$ and $\C_{j,r}$ has quadratic accumulation in parallel, as does $\C_{q,r}$  for $\Omega(n^b)$ values of $q \neq j$.  If the maximum value of $b$ for all channel families in $\{\C_{j,r}\}$ is $b_{\mathrm{max}}$, then the sequence of channels has at most order $b_{\mathrm{max}}+1$ accumulation in the circuit family $\{Q_n\}$.
\end{thm}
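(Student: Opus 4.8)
The plan is to collapse the entire sequential-circuit fidelity into a single variance of a sum of local generators, and then to extract all four claims from elementary covariance estimates. As in the proof of thm.~\ref{thm:accumulation}, I would first invoke lemma~\ref{lemma:LKA} to write each channel with a leading Kraus operator $A_0^{(j)} = I + i K_j + (\tilde D_j - I) + \cdots$, where $K_j = \theta_j H_j = O(\sqrt r)$ is the coherent generator and $\tilde D_j - I = O(r)$ is dissipative. Exactly as there, after dividing by $r$ and letting $r \to 0$ the only surviving contributions come from the term in which every channel uses its leading Kraus operator, plus the single-error terms; the single-error and dissipative pieces each contribute only $O(n)$ to $G(n)$, so they cannot affect whether the accumulation is quadratic.

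The crux is the leading-Kraus path $M = A_0^{(n)} U_n \cdots A_0^{(0)} U_0$. I would pass to the output frame by conjugating each $A_0^{(j)}$ through the unitaries that follow it, setting $\tilde K_j = (U_n \cdots U_{j+1}) K_j (U_n \cdots U_{j+1})^\dagger$; one then checks that $\bra{\psi_n} M \ket{0 \cdots 0} = \bra{\psi_n} \prod_j (I + i \tilde K_j - \tfrac12 \tilde K_j^2 + \cdots) \ket{\psi_n}$. Expanding to second order and forming $1 - |\cdot|^2$, the squared first-order term and the second-order cross terms reorganize precisely into a variance, giving
\begin{equation}
G(n) = \mathrm{Var}_{\psi_n}\!\left( \sum_j \tilde g_j \right) + O(n), \qquad \tilde g_j = \lim_{r \to 0} \tilde K_j / \sqrt r ,
\end{equation}
with $\mathrm{Var}_{\psi_n}(S) = \bra{\psi_n} S^2 \ket{\psi_n} - (\bra{\psi_n} S \ket{\psi_n})^2$. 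Establishing this identity --- verifying the frame-change algebra and confirming that everything outside the variance is genuinely only linear in $n$ --- is the main obstacle; after it, the argument is bookkeeping.

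Writing $\mathbf S = \sum_j \tilde g_j$ and $\mathrm{Cov}(A,B) = \tfrac12 \bra{\psi_n} \{A,B\} \ket{\psi_n} - \bra{\psi_n} A \ket{\psi_n} \bra{\psi_n} B \ket{\psi_n}$, removing channel $j$ replaces $\mathbf S$ by $\mathbf S - \tilde g_j$, so $\Delta_j(n) = \mathrm{Var}(\mathbf S) - \mathrm{Var}(\mathbf S - \tilde g_j) + O(1) = 2 \sum_{k \neq j} \mathrm{Cov}(\tilde g_k, \tilde g_j) + O(1)$. Since each channel touches a bounded number of qubits, every $\tilde g_k$ has bounded operator norm, and Cauchy--Schwarz for the positive-semidefinite symmetrized covariance form makes each term $O(1)$; hence $\Delta_j(n) = O(n)$ and $b \leq 1$. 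If $b > 0$ the sum is unbounded, forcing $\tilde g_j \neq 0$, so $\C_{j,r}$ has a nonzero coherent part and therefore quadratic accumulation in parallel by thm.~\ref{thm:accumulation} (thm.~\ref{thm:quditaccumulation} for multi-qubit channels); and a $\Theta(n^b)$ sum of $O(n)$ terms each bounded by a constant must have $\Omega(n^b)$ nonzero terms, each of which forces the corresponding $\tilde g_q \neq 0$ and hence quadratic parallel accumulation of $\C_{q,r}$. Finally, summing $\mathrm{Cov}(\tilde g_j, \mathbf S) = \tfrac12 \Delta_j(n) + O(1)$ over all $j$ yields $\mathrm{Var}(\mathbf S) = \tfrac12 \sum_j \Delta_j(n) + O(n)$, so bounding each $|\Delta_j(n)| = O(n^{b_{\mathrm{max}}})$ gives $G(n) = O(n^{b_{\mathrm{max}} + 1})$, the last claim.
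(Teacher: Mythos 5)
Your proof is correct, and its skeleton matches the paper's: the leading-Kraus reduction via lemma~\ref{lemma:LKA}, the observation that dissipative and single-error terms contribute only $O(n)$ to $G(n)$ and only $O(1)$ to $\Delta_j(n)$, and the reduction of all superlinear behavior to pairwise coherent cross terms, each bounded by a constant, followed by the same counting argument. The difference is in the bookkeeping, and it is a genuinely useful one. The paper expands everything in Pauli components, tracking insertion states and using the pull-back identity $\braket{\psi_n}{\phi_n (P, q)} = \bra{\psi_q} P \ket{\psi_q}$ to derive the explicit decomposition $\delta F^j = A^j(r) + B^j(n,r) + o(r)$, with $B^j$ a sum of $O(n)$ terms proportional to $\theta_q \theta_j$. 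You instead conjugate the coherent generators into the output frame and package the entire second-order expansion as $\mathrm{Var}_{\psi_n}\big(\sum_j \tilde g_j\big)$, so that $\Delta_j(n) = 2\sum_{k \neq j} \mathrm{Cov}(\tilde g_k, \tilde g_j) + O(1)$, and Cauchy--Schwarz for the symmetrized covariance form, together with $\|\tilde g_k\|_\infty = \|g_k\|_\infty = O(1)$ (which is exactly where the assumption that each channel acts on a bounded number of qubits enters), gives the $O(1)$ bound per pair. The two computations are identical in content --- the paper's $B^j(n,r)$ is your covariance sum written out in components, and its closing identity $F(n,r) = 1 - \sum_j A^j(r) - \frac{1}{2}\sum_j B^j(n,r)$ is your resummation $\mathrm{Var}(\mathbf{S}) = \frac{1}{2}\sum_j \Delta_j(n) + O(n)$ --- but your packaging makes $b \leq 1$, the $\Omega(n^b)$ count, and the final $O(n^{b_{\mathrm{max}}+1})$ bound transparent and basis-independent. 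What the paper's component-level route buys in exchange is eq.~\eqref{eq:deltaF} itself, which is reused in sec.~\ref{sec:contribute} to design the state-preparation protocol; your variance identity would have to be unpacked again to serve that purpose. Two points you should make explicit: the operators $\tilde g_j$ depend on $n$ (more unitaries get conjugated in as the circuit grows), which is harmless only because conjugation preserves operator norm, so all your bounds are uniform in $n$; and the existence of the limits defining the $\tilde g_j$ is an implicit assumption, though it is on the same footing as the paper's use of $\Theta(\cdot)$ statements for the $\theta_q$.
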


If a channel has quadratic accumulation in parallel, it can also contribute order 1 coherent accumulation in sequence (as the example of GHZ states shows), but that does not mean it always does so in any particular family of circuits.  It may be that in a particular family, there are no other channels that it can constructively interfere with, or it may get twirled out relative to the other channels, or perhaps there is both constructive and destructive interference at work which largely cancels out.

\begin{proof}
Once again, we can make a leading Kraus approximation and consider only $A_0$ for each channel:  Let $A_k^q$ be the $k$th Kraus operator for channel $q$.  Let $\rho^{q}_{k} = \rho_n (A_0^0, 0; A_0^1, 1; \ldots; A_0^{q-1},q-1; A_k^q; q; A_0^{q+1}, q+1; \ldots; A_0^n, n)$, the noisy circuit where we have made the leading Kraus approximation in all but the location $q$, where we have instead used the Kraus operator $A_k^q$.  Let $\rho_0 = \rho_n (A_0^0, 0; A_0^1, 1; \ldots; A_0^n, n)$, where we make the leading Kraus approximation in all locations.  Let $\rho^{q,j}_k$ ($q \neq j$) and $\rho_0^j$ be these same quantities but where we omit inserting any error in the $j$th location.

Since $\|A_k^j\|_\infty = O(\sqrt{r})$ for $k > 0$, then $\| \rho^j_k \| = O(r)$; if we were to look at states of the circuits where two locations did not use the leading Kraus approximation, those would be $O(r^2)$.  We then have
\begin{align}
\rho_{n,r} &= \rho_0 + \sum_{q=0}^n \sum_{k \neq 0} \rho^q_k + O(r^2) \\
\rho_{n,r}^j &= \rho_0^j + \sum_{q\neq j} \sum_{k \neq 0} \rho^{q,j}_k + O(r^2) \\
\rho_{n,r}^j - \rho_{n,r} &= (\rho_0^j -  \rho_0) - \sum_{k \neq 0} \rho^j_k + \sum_{q \neq j} \sum_{k \neq 0} (\rho^{q,j}_k -  \rho^q_k) + O(r^2).
\end{align}
Now, $A_0^q$ is close to $I$ for all $q$, $A_0^q = I + o(1)$.  Since $\rho^q_k$ and $\rho^{q,j}_k$ are $O(r)$, it follows that for $k \neq 0$,
\begin{align}
\rho^q_k &= \rho_n (A_k^q; q) + o(r) \\
\rho^{q,j}_k &= \rho_n (A_k^q; q) + o(r),
\end{align}
so $\rho^{q,j}_k -  \rho^q_k = o(r)$.  Thus,
\begin{equation}
\rho_{n,r}^j - \rho_{n,r} = (\rho_0^j -  \rho_0) - \sum_{k \neq 0} \rho^j_k + o(r).
\end{equation}
The term $\sum \rho^j_k = \sum \rho_n (A_k^q; q) + o(r)$ is just a constant size, independent of $n$, so its contribution to $\Delta_j(n)$ will be a constant.  If the channel $\C_{j,r}$ is contributing non-constant accumulation, it must come from the term $\rho_0^j -  \rho_0$.

Since the worst-case fidelity of $\C_{j,r}$ is $\Theta(r)$, by equation~\eqref{eq:A0worstcasequdit}, it is still the case that at least one of $p$ and $\theta^2$ for each channel must be $\Theta(r)$.  Expand each $A_0^q$ as in equation~\eqref{eq:A0qudit}:
\begin{equation}
A_0^q =  (1 - p_q - \theta_q^2/2) I + \sum_{P \neq I} (i \theta_q h_{q,P} + d_{q,P}) P.
\end{equation}
Since all of the $A_0^q$ are linear operators on Hilbert space, the states $\rho_0 = \ketbra{\phi_0}{\phi_0}$ and $\rho_0^j = \ketbra{\phi_{0}^j}{\phi_0^j}$ are pure states.  We find (to order $r$)
\begin{align}
\ket{\phi_0} &= \left[1 - \sum_{q=0}^n (p_q + \theta_q^2/2)\right] \ket{\psi_n} + \sum_{q=0}^n \sum_{P \neq I} (i \theta_q h_{q,P} + d_{q,P}) \ket{\phi_n (P, q)} - \nonumber \\
& \qquad - \sum_{q < q'} \sum_{P, Q \neq I} \theta_q \theta_{q'} h_{q,P} h_{q',Q} \ket{\phi_n (P,q; Q, q')}.
\end{align}
$\ket{\phi_0^j}$ is similar, but with the sums taken only over $q \neq j$.
Then
\begin{align}
\bra{\psi_n} \rho_{0} \ket{\psi_n} &= |\braket{\psi_n}{\phi_0}|^2 \\
&= 1 - \sum_{q=0}^n (2p_q + \theta_q^2) + \sum_{q=0}^n \sum_{P \neq I}  [(i \theta_q h_{q,P} + d_{q,P}) \braket{\psi_n}{\phi_n (P, q)} + \mathrm{c.c.}] + \\
&  + \Big| \sum_{q=0}^n \sum_{P \neq I} \theta_q h_{q,P} \braket{\psi_n}{\phi_n (P, q)} \Big|^2 - 2\sum_{q<q'} \sum_{P,Q \neq I}  \theta_q \theta_{q'} \text{Re } \big(h_{q,P} h_{q',Q} \braket{\psi_n}{\phi_n (P,q; Q, q')}\big). \nonumber 
\end{align}

Now, the circuits $Q_n$ and $Q_n (P,q)$ both have the same sequence of unitaries $U_{q+1}, \ldots, U_n$ after location $q$, and $\ket{\phi_k (P,q)} = P \ket{\psi_q}$, so
\begin{equation}
\braket{\psi_n}{\phi_n (P, q)} = \bra{\psi_q} P \ket{\psi_q}
\end{equation}
and $\braket{\phi_n (P, q)}{\psi_n} = \bra{\psi_q} P^\dagger \ket{\psi_q}$.  By reordering the sums, we therefore get
\begin{equation}
\sum_{P \neq I}  h_{q,P} \braket{\psi_n}{\phi_n (P, q)} = \sum_{P \neq I}  h_{q,P^\dagger}  \braket{\phi_n (P, q)}{\psi_n},
\end{equation}
and recall that $h_{q,P^\dagger} = h_{q,P}^*$.
Thus,
\begin{align}
\bra{\psi_n} \rho_{0} \ket{\psi_n}  &= 1 - \sum_{q=0}^n (2p_q + \theta_q^2) + \sum_{q=0}^n \sum_{P \neq I} 2  \text{Re } \big(d_{q,P} \bra{\psi_q}P \ket{\psi_q} \big) + \\
& + \Big| \sum_{q=0}^n \sum_{P \neq I} \theta_q h_{q,P}\bra{\psi_q}P \ket{\psi_q} \Big|^2 - 2\sum_{q<q'} \sum_{P,Q \neq I}  \theta_q \theta_{q'} \text{Re } \big(h_{q,P} h_{q',Q} \braket{\psi_n}{\phi_n (P,q; Q, q')} \big). \nonumber 
\end{align}
We find the same expression for $\bra{\psi_n} \rho_0^j \ket{\psi_n}$, but the sums for $q$ and $q'$ exclude $q=j$ or $q'=j$.  Therefore,
\begin{align}
\delta F^j (n,r) &= \bra{\psi_n} (\rho_0^j - \rho_0) \ket{\psi_n} - \sum_{k \neq 0} \bra{\psi_n} \rho_k^j \ket{\psi_n} + o(r) \\
&= A^j (r) +  B^j (n,r) + o(r) \label{eq:deltaF} \\
A^j (r) &=  2p_j + \theta_j^2 - \sum_{P \neq I} \left[ 2  \text{Re } \big(d_{j,P}\bra{\psi_j}P \ket{\psi_j} \big) - \theta_j^2 |h_{j,P}|^2 |\bra{\psi_j}P \ket{\psi_j}|^2 \right] -  \sum_{k \neq 0} |\bra{\psi_j} A_k \ket{\psi_j}|^2 \\
B^j (n,r) &=  2 \sum_{q \neq j}\sum_{P,Q \neq I} \theta_q \theta_j \text{Re } \Big( h_{q,P} h_{j,Q} \braket{\psi_n}{\phi'_n (P,q; Q, j)} - h_{q,P} h_{j,Q}^* \bra{\psi_q}P \ket{\psi_q} \bra{\psi_j}Q^\dagger \ket{\psi_j} \Big)
\end{align}
In this sum, $\ket{\phi'_n (P,q; Q, j)} = \ket{\phi_n (P,q; Q,j)}$ if $q < j$ and $\ket{\phi'_n (P,q; Q, j)} = \ket{\phi_n (Q,j; P,q)}$ if $j < q$.

All terms in $A^j(r)$ are $O(r)$ and only involve sums over constant (in $n$) sized domains, so only $B^j(n,r)$ can contribute order $b$ coherent accumulation for $b>0$.  There are at most $O(n)$ terms in $B^j(n,r)$, so $b \leq 1$.  Those terms scale with $r$ only through the product $\theta_q \theta_j$.  Since $\theta_q = O(\sqrt{r})$, the second line will only be linear in $r$ if $\theta_j = \Theta(\sqrt{r})$.  Thus, the channel $\C_{j,r}$ can have $b>0$ only if $\theta_j^2 = \Theta(r)$.  These are exactly the cases in which $\C_{j,r}$ allows quadratic accumulation in parallel.  Each term in the sum for $B^j(n,r)$ can contribute only if $\theta_q = \Theta(\sqrt{r})$ as well, so we additionally need $\theta_q^2 = \Theta(r)$ for $\Omega(n^b)$ values of $q \neq j$.

Also notice that 
\begin{equation}
F(n,r) = 1 - \sum_j A^j (r) - \frac{1}{2} \sum_j B^j (n,r).
\end{equation}
If $b_{\mathrm{max}}$ is the maximum value of $b$ for any channel in the set, then $|B^j(n,r)| = O(n^{b_{\mathrm{max}}})$, which implies $G(n) = O(n^{b_{\mathrm{max}}+1})$.

\end{proof}

\section{Measuring Contribution to Accumulation}
\label{sec:contribute}

Theorem~\ref{thm:sequential} shows that if we wish to identify if a channel can contribute non-trivial coherent accumulation in some family of circuits, it is sufficient to check if it does so in parallel.  However, there are circumstances in which this is not possible.  State preparation illustrates one such case: The channel only appears in conjunction with a particular element (state preparation) which cannot be placed where we need it to be to run the usual maximally sensitive states protocol.  There are other cases of interest too, for instance if we wish to study the noise on a single qubit but are unable to do repeated measurement as in sec.~\ref{sec:fewerqubits}.  Or we might be interested in determining not just whether a particular channel \emph{can} exhibit contribute coherent accumulation, but if it contributes coherently in a specific circuit family.

The basis of our tests will be definition~\ref{dfn:contributesaccumulation} and equation~\eqref{eq:deltaF}.  The basic idea is that we can use randomized compiling to selectively twirl different channels, separately twirling the particular channel to study from other possible noise channels that it can accumulate with.  By comparing the four possibilities of which channels are twirled, we can determine whether the channel of interest is contributing to accumulation.

\subsection{Testing State Preparation}

If we wish to measure the potential for noise in state preparation to contribute coherent accumulation, we want a situation in which the coherent noise in the state preparation step combines with coherent noise elsewhere in the circuit.  One difficulty is that we don't necessarily know which channels are present in the natural noise in the system and even if we did, we don't know under what circumstances they will be able to interfere constructively with the channel we are trying to test.  Therefore, in order to guarantee we can detect all possible coherent noise in the state preparation, we need to add noise of a particular type deliberately.

We need a nested circuit family.  Let us keep the circuits used as simple as possible.  We will simply prepare a $\ket{0}$ state, wait for $n$ time steps, and then measure in the standard basis, but sometimes we will also add artificial noise, a rotation $\exp(i \phi \vec{w} \cdot \vec{\sigma})$, after each time step.  I will assume for the time being that the naturally occurring noise during the waiting time is negligible, and that all the naturally occurring noise is in the state preparation and measurement.   In particular, I am assuming that there is no noise in the rotations, but the result is not at all sensitive to this assumption.  We wish to check for coherent errors in the preparation, but not in the measurement, so we will always twirl the measurement: With probability $1/2$ we do the circuit as above, and with probability $1/2$ do the circuit but with a $Z$ just before the measurement.  The noisy twirled measurement, as discussed above, can be modeled as a channel $\D$ followed by a perfect measurement.

Let $Q_n$ be the family of circuits consisting of perfect $\ket{0}$ state preparation followed by $n$ time steps of waiting.  $\{Q_n\}$ is then a nested circuit family.  Let $\C_{0,r}$ be the noise on state preparation.  Then the circuit without artificial noise is $Q_n (\C_{0,r}, 0; \D, n)$ followed by perfect measurement, which is projection on the state $\ket{0}$ and its complement.  This circuit thus exactly measures the fidelity of $\rho_n (\C_{0,r}, 0; \D, n)$ to $\ket{\psi_0} = \ket{0}$.  Let $\C_{q,r}$ be the unitary $\exp(i \phi \vec{w} \cdot \vec{\sigma})$ for $q= 1, \ldots n-1$ and $\C_{n,r}$ be $\exp(i \phi \vec{w} \cdot \vec{\sigma})$ followed by $\D$.  

Let us apply equation~\eqref{eq:deltaF} to $\{Q_n\}$ and $\{\C_{q,r}\}$.  The framework of sec.~\ref{sec:circuits} does not perfectly fit here, since we don't have a fixed sequence of channels for all $n$ --- the $n$th channel is treated differently since it has the noise $\D$ added.  However, this does not matter --- theorem~\ref{thm:sequential} still tells us the conditions under which $\C_{0,r}$ can contribute linear coherent accumulation and equation~\eqref{eq:deltaF} still applies to the particular circuit and channels we have.

The location of interest is $j=0$, so let $\theta_j = \theta$ be the coherent rotation angle for the noise $\C_{0,r}$ during state preparation, and $\theta_q = \phi$ for $j \neq 0$.  Since $\D$ is twirled, this is true also for $q=n$.  Since we are working with a qubit, the $h_{q,P}$ are real, and because $Z$ errors during preparation of $\ket{0}$ are meaningless, $h_{0,Z} = 0$.  For $q \neq 0$, $h_{q,P} = w_P$.  Since $\ket{\psi_0} = \ket{0}$, $\bra{\psi_0} P \ket{\psi_0} = 0$ for $P = X, Y$, the only cases that matter.  Therefore, all the potentially non-constant contribution to $\delta F^0 (n,r)$ in this case will come from the term involving $\braket{\psi_n}{\phi'_n (P,q; Q, j)}$.

Now, $\ket{\phi_n (Q,0; P,q)} =  PQ \ket{0}$.  If $P = Q$, then $PQ = I$.  Otherwise, $PQ = \pm i R$, where $R$ is the third Pauli.  When $R = X, Y$, then $\bra{0} PQ \ket{0} = 0$, so we need only consider the cases when $P = Q$ or $R = Z$.  That is, we have $P = Q = X$, $P = Q = Y$, $P = X$ and $Q = Y$, and $P = Y$ and $Q = X$.  The cases where $R = Z$ give something imaginary, so won't contribute to the real part.  

The $A^j(r)$ term in equation~\eqref{eq:deltaF} is independent of $n$.  We thus have
\begin{align}
\delta F^0 (n,r) &= n \theta \phi \sum_{P,Q \neq I} w_P h_{0,Q} \text{Re } \braket{0}{\phi_n (Q, 0; P, q)} + A^j (r) + o(r) \\
&= n \theta \phi  (w_X h_{0,X} + w_Y h_{0,Y}) + \delta F^0 (0,r)  + o(r).
\end{align}
Therefore if we run some circuits with $w_X = 1$ and some circuits with $w_Y = 1$, we can pick out the $h_{0,X}$ and $h_{0,Y}$ components of the noise on the state preparation.

At least, we could if we could implement both $Q_n^{\phi} = Q_n (\C_{0,r},0; \C_{1,r}, 1; \ldots; \C_{n,r},n)$ and $Q_n^{0,\phi} =  Q_n (\C_{1,r}, 1; \ldots; \C_{n,r},n)$ followed by perfect measurement.  We can do $Q_n^{\phi}$: this is what we get by creating the $\ket{0}$ state using noisy state preparation and following by $n$ time steps of rotations, then a twirled noisy measurement.  However, we cannot do $Q_n^{0,\phi}$, since that would require a perfect (noise-free) state preparation.

What we can do instead is $Q_n^{t,\phi} = Q_n (\C^t, 0; \C_{1,r}, 1; \ldots; \C_{n,r},n)$, where $\C^t$ is the noise channel we get by twirling the state preparation (with probability $1/2$ after preparing the state, do $Z$; otherwise nothing).  Let $\rho^t_{n,r}$ be the output of $Q_n^{t,\phi}$ and let $\delta F^t (n,r) = \bra{0} (\rho^0_{n,r} - \rho^t_{n,r}) \ket{0}$ and $\Delta_t (n) = \lim_{r \rightarrow 0} \delta F^t(n,r)/r$.  $\C^t$ has no coherent part and cannot contribute coherent accumulation, so $\Delta_t(n)$ is guaranteed to be at most constant in $n$.  

Then for $Q_n^{0,\phi}$,
\begin{align}
\delta F^0(n,r) &= \bra{0} (\rho^0_{n,r} - \rho^t_{n,r}) + (\rho^t_{n,r} - \rho_{n,r}) \ket{0} \\
&= \delta F^t (n,r) + \bra{0}  (\rho^t_{n,r} - \rho_{n,r}) \ket{0}.
\end{align}
Since $\bra{0} \rho^t_{n,r} \ket{0}$ is the probability $P_n^t$ that $Q_n^{t,\phi}$ outputs $0$ and $\bra{0} \rho_{n,r} \ket{0}$ is the probability $P_n$ that $Q_n^{\phi}$ outputs $0$, the last terms are measurable.  The remaining term $\delta F^t (n,r)$ is not directly measurable, but the $\Theta(r)$ part of that must be constant in $n$.  In particular, if $P_n^t - P_n$ increases linearly with $n$, the cause must be (subject to our approximations) that the state preparation contributes to linear accumulation.

In order to eliminate the $\delta F^t (n,r)$ term and determine whether the linear accumulation is actually present or not, we can also measure $P_0^t - P_0$.  In particular,
\begin{align}
(P_n^t - P_n) - (P_0^t - P_0) &= [\delta F^0 (n,r) - \delta F^0 (0,r)] + [\delta F^t (0,r) - \delta F^t (n,r)] + o(r) \\
&=  n \theta \phi  (w_X h_{0,X} + w_Y h_{0,Y}) + o(r).
\end{align}
We can then find the parameters of the noise on the state preparation by simply dividing by $n \phi$.  If we want additional confidence that the behavior is actually proportional to $n$, we could measure additional values of $n$.

Putting this all together, we get the following protocol:
\begin{prot}[Measure coherent errors in state preparation]
Choose $n$ and $\phi$ as desired, provided $n\phi \ll 1$ so that we are still in the limit of small $r$.   Subject to this constraint, larger $n\phi$ will make a larger signal $P_{n,X}^t$ and $P_{n,Y}^t$.
\begin{enumerate}
\item Perform $Q_0^{t,\phi}$ (prepare the state $\ket{0}$ and measure immediately, twirling both preparation and measurement), which has output $0$ with probability $P_0^t$.
\item Perform $Q_0^\phi$ (prepare and measure immediately, twirling only measurement), which has output $0$ with probability $P_0$.
\item Perform $Q_n^{t,\phi}$ using $w_X = 1$ and $w_Y = 0$, which has output $0$ with probability $P_{n,X}^t$ and again using $w_X = 0$, $w_Y = 1$, which has output $0$ with probability $P_{n,Y}^t$.
\item Perform $Q_n^\phi$ using  $w_X = 1$ and $w_Y = 0$, which has output $0$ with probability $P_{n,X}$ and again using $w_X = 0$, $w_Y = 1$, which has output $0$ with probability $P_{n,Y}$.
\item Calculate 
\begin{align}
\theta h_{0,X} &= [(P_{n,X}^t - P_{n,X}) - (P_0^t - P_0)]/(n\phi), \\
\theta h_{0,Y} &=  [(P_{n,Y}^t - P_{n,Y}) - (P_0^t - P_0)]/(n\phi).  
\end{align}
$|h_{0,X}|^2 + |h_{0,Y}|^2 = 1$, so this tells us $\theta$, $h_{0,X}$ and $h_{0,Y}$.
\end{enumerate}

\end{prot}

Note that if the $\phi$ angle rotations are slightly noisy, then we have a slightly different circuit than expected: $Q_n (\C_{0,r},0; \C'_{1,r}, 1; \ldots; \C'_{n,r},n)$ with $\C'_{q,r}$ a channel close to the desired $\C_{q,r}$.  The analysis above still holds almost unchanged, but $\phi$, $w_X$, and $w_Y$ might be slightly off from the desired values.  This translates into an error in $h_{0,X}$ and $h_{0,Y}$, but provided the error in the rotation is small, the deduced values will be close to correct.  In principle, one can perform a single rotation by angle $n\phi$ instead of $n$ rotations by an angle $\phi$, but in the case where these rotations are noisy, this might result in a deviation from the desired behavior if the noise cannot be evenly divided up among $n$ separate smaller rotations.  This effect will be particularly pronounced if we are trying different values of $n$ to compare and if rotations by the different angles also have substantially different errors; if the noise on the $n\phi$ rotation does not scale linearly with $n$, this may largely or completely obscure the signal we are trying to see.

Essentially the same procedure can be used with measurement to provide an alternate way of testing for quadratic accumulation of errors during measurement, and a related but slightly more complicated protocol can test for quadratic accumulation of errors in gates.  The protocol for measurement is almost identical to that for state preparation but always twirls the state preparation and either twirls or does not twirl the measurement.  The protocol for gates uses an initial Bell state with an ancilla qubit to ensure that all of $X$, $Y$, and $Z$ can be tested and that only errors that cancel with later errors from the artificial noise will contribute to $B^j(n,r)$ in equation~\eqref{eq:deltaF}.  Compared to the technique from sec.~\ref{sec:testgates}, these protocols are simpler, using just a single qubit (or $2t$ for testing $t$-qubit gates) with no need for reusing qubits or making multiple measurements in the course of the circuit.  However, since the error to be measured only appears once in the circuits and we are subtracting two very similar small quantities, we will need many more iterations of the testing circuits from this section in order to accumulate sufficient statistics to determine if the gate can contribute coherent accumulation or not.

\subsection{Accumulation in a Specific Circuit Family}

One nice thing about the protocol for testing noise in state preparation is that it is not really dependent on seeing a scaling with $n$.  We simply compare the $n=0$ case to a single $n \neq 0$ case and see if there is a difference.  Inspired by this, we can use a similar protocol to determine if a specific location (which could be a gate, a state preparation, or a measurement) in a specific circuit has errors which add coherently with other errors in that circuit.  Since we have no family of circuits and cannot scale to larger $n$, definition~\ref{dfn:contributesaccumulation} does not actually apply in this case, but conceptually, we expect things to behave similarly.

In particular, consider the following protocol:
\begin{prot}[Identify if errors in a particular location add coherently in a circuit]
\ 

\begin{enumerate}
\item Run the circuit $Q^{j,T}$ with randomized compiling for both location $j$ and the set of all locations $T$ that might be experiencing coherent noise.  Note the output distribution $X^{j,T}$.
\item Run the circuit $Q^{T}$ where the location $j$ is not twirled but all other possible coherent noise locations are.  The output distribution is $X^T$.
\item Compute the fidelity $F^{j,T}$ between $X^T$ and $X^{j,T}$; this sets a baseline of comparison for how much difference twirling makes on the noise in location $j$ in isolation.
\item Run the circuit $Q^j$ using randomized compiling on location $j$ but nowhere else.  The output distribution is $X^j$.  Compute the fidelity $F^j$ between $X^j$ and $X^T$.
\item Run the circuit $Q$ not using randomized compiling anywhere.  The output distribution is $X$.  Compute the fidelity $F$ between $X$ and $X^T$.
\item Let $\Delta F = F^j - F$.  If $\Delta F \gg 1-F^{j,T}$, then conclude that the location $j$ is contributing to greater than linear accumulation.
\end{enumerate}
\end{prot}
The idea here is that twirling the location $j$ may change how it contributes to the overall error in the circuit, but not by very much unless it can constructively interfere with noise coming from other locations.  If location $j$ is not coherent with the noise from other locations, the error it creates should be more or less independent of the other errors in the circuit and $\Delta F$ will be about the same as $1-F^{j,T}$.  On the other hand, if there is greater than linear accumulation, that will be present in $Q$ and the overall error rate will be substantially higher than in $Q^j$.

We can further refine this protocol and get some information about what the actual accumulation is (i.e., whether it is quadratic or something smaller but still superlinear) by twirling not all of $T$ but subsets of different sizes.  Then we can look at the scaling of $\Delta F$ as the untwirled subset of $T$ gets larger.

\section{Conclusion}

I have presented an efficient protocol to test for coherent errors in a system.  For coherent noise with worst-case fidelity $r$, we can detect the presence of the noise using only $O(1/\sqrt{r})$ qubits  and operations.  For qubits, the protocol allows us to completely characterize the coherent part of the error in the channel with a similar number of qubits, whereas for higher-dimensional qudits, we can only learn partial information about the coherent part of the channel.  These protocols are within a constant factor (dependent only on the qudit dimension) of the optimal sensitivity to coherent noise, and can thus identify very small errors with considerably fewer resources than protocols which do not explicitly utilize the quadratic scaling of coherent errors.  In general, the protocol is quite efficient, using $O(n)$ gates with small constant factors and straightforward classical processing.

The protocol requires either construction of an $n$-qubit entangled state for modest size $n$ or two qubits ($4$ if measuring the noise in a two-qubit gate) and $n-1$ repetitive measurements and resets of one of those qubits during the circuit.  How large an $n$ is needed depends on the size of the errors and the amount of statistics one is willing to collect, but a modest size $n$, perhaps on the order of $10$, should be enough to distinguish quadratic accumulation of errors from linear accumulation.  I have also presented an alternative protocol that requires even fewer capabilities and can detect coherent errors in state preparation, but is less sensitive, requiring more statistics to see an effect.

Both protocols are robust to small variations in the coherent errors from gate to gate, but when gate errors vary substantially when the same gate is performed on different qubits, the $n$-qubit version of the maximally sensitive states protocol breaks down.  However, the version resetting and reusing one of the qubits (sec.~\ref{sec:fewerqubits}) continues to work without change to detect coherent noise on the gate on the qubit being reset.  The protocol of sec.~\ref{sec:contribute} also works unchanged in this situation.

I have worked in the limit $r \rightarrow 0$ in order to have a well-defined theoretical framework for talking about the dominant error.  What does this mean in practice, for some fixed noise model?  Generally, as we vary $n$, we can expect to have a term linear in $n$ and a term quadratic in $n$, with the latter being the signal we are interested in here and everything else contributing to the linear term.  In order to see the signal, we thus need the coefficient of the quadratic term to be at least about $1/n$ as large as the coefficient of the linear term.  So for instance, if we are working with roughly 10 qubits, we can hope to see the quadratic term even if the effect of coherent errors is roughly $1/10$ the size of the stochastic errors on a single qubit.  This is the advantage of working with maximally sensitive states: the small coherent terms are enhanced in size.  The limits of this are corrections of order $r^2$ compared to the signal of order $r$, plus the scaling will break down once $n^2 r$ becomes of order $1$, so that sets an upper limit to how large an $n$ we can use.   In this same case of approximately 10 qubits, we would expect the protocol to work as long as $r \lesssim 0.01$.

There remain a number of open questions, particularly relating to qudits with dimension $d > 2$.  One question is how to completely characterize the coherent part of a channel for qudits.  Another is how many states are needed.  The most efficient maximally sensitive sets I present take advantage of MUBs, and use $d+1$ states.  It seems likely this is the best possible (though that remains an open question as well), but complete sets of MUBs are only known to exist when $d$ is a prime power.  It is unclear if more states are needed in other dimensions, or if other constructions can achieve similar scaling.

Another open question is the characterization of measurements for qudits.  Noisy qubit measurements are two-outcome POVMs, which have a simple form, but multiple-outcome POVMs are significantly more complicated and may need ancillas to implement.

Perhaps a more critical question is what happens to this protocol when the assumptions about the noise are not satisfied.  I have shown that small variation in the coherent error from qubit to qubit is not too important, but what if there are non-Markovian errors in the other gates?  What if the errors in the Paulis used for twirling are not negligible?  It is plausible that some signal will remain in these cases, but it may well be partially masked by other sources of coherent errors that overcome the randomized compiling.  Leakage errors are another interesting case.  If each qubit is leaking into a separate environment, this is effectively a stochastic error and errors on different qubits cannot accumulate coherently (although they could due to multiple gates on the same qubit).  However, coherent leakage of information into a shared environment might allow quadratic accumulation of error that does not show up in this protocol.

\paragraph*{Acknowledgements}  I would like to thank Arnaud Carignan-Dugas, Joseph Emerson, David Poulin, and Joel Wallman for extremely useful discussions.  Research at Perimeter Institute is supported in part by the Government of Canada through the Department of Innovation, Science and Economic Development Canada and by the Province of Ontario through the Ministry of Economic Development, Job Creation and Trade.


\begin{thebibliography}{99}
\bibitem{FT} Daniel Gottesman, ``An introduction to quantum error correction and fault-tolerant quantum computation,'' in {\it Quantum Information Science and Its Contributions to Mathematics}, Proceedings of Symposia in Applied Mathematics {\bf 68}, pp. 13--58 (Amer. Math. Soc., Providence, Rhode Island, 2010); arXiv:0904.2557 [quant-ph]. 
\bibitem{AGP} Panos Aliferis, Daniel Gottesman, and John Preskill, ``Quantum accuracy threshold for concatenated distance-3 codes,'' Quant.\ Inf.\ Comp. {\bf 6}, 97--165 (2006); arXiv:quant-ph/0504218.
\bibitem{coherent} Andrew S. Darmawan and David Poulin, ``Tensor-network simulations of the surface code under realistic noise,''  \href{http://doi.org/10.1103/PhysRevLett.119.040502}{Phys. Rev. Lett. {\bf 119}, 040502 (2017)}; arXiv:1607.06460 [quant-ph].
\bibitem{coherent2} Sergey Bravyi, Matthias Englbrecht, Robert Koenig, and Nolan Peard, ``Correcting coherent errors with surface codes,''  \href{http://doi.org/10.1038/s41534-018-0106-y}{npj Quantum Information {\bf 4}, no.~55 (2018)}; arXiv:1710.02270 [quant-ph].
\bibitem{DD} Lorenza Viola, Emanuel Knill, and Seth Lloyd, ``Dynamical decoupling of open quantum systems,'' \href{http://doi.org/10.1103/PhysRevLett.82.2417}{Phys.\ Rev.\ Lett.\ {\bf 82}, 2417 (1999)}; arXiv:quant-ph/9809071 .
\bibitem{RC} Joel J. Wallman and Joseph Emerson, ``Noise tailoring for scalable quantum computation via randomized compiling,''  \href{http://doi.org/10.1103/PhysRevA.94.052325}{Phys.\ Rev.\ A {\bf 94}, 052325 (2016)}; arXiv:1512.01098 [quant-ph].
\bibitem{GST} Robin Blume-Kohout, John King Gamble, Erik Nielsen, Jonathan Mizrahi, Jonathan D. Sterk, and Peter Maunz, ``Robust, self-consistent, closed-form tomography of quantum logic gates on a trapped ion qubit,'' arXiv:1310.4492 [quant-ph].
\bibitem{unitarity} Joel Wallman, Chris Granade, Robin Harpe, and Steven T Flammia, ``Estimating the coherence of noise,'' \href{http://doi.org/10.1088/1367-2630/17/11/113020}{New J.\ Phys.\ {\bf 17}, 113020 (2015)}; arXiv:1503.07865 [quant-ph].
\bibitem{LeadingKraus} Arnaud Carignan-Dugas, Matthew Alexander, and Joseph Emerson, ``A polar decomposition for quantum channels  (with applications to bounding error propagation in quantum circuits),''  arXiv:1904.08897 [quant-ph].
\bibitem{MUB} William K.~Wootters and Brian D.~Fields, ``Optimal state-determination by mutually unbiased measurements,'' \href{http://doi.org/10.1016/0003-4916(89)90322-9}{Ann.\ Phys.\ {\bf 191}, 363--381 (1989)}.
\bibitem{Poulin} David Poulin, personal communication.
\bibitem{NRS} Gabriele Nebe, E.~M.~Rains, and N.~J.~A.~Sloane, ``The invariants of the Clifford groups,'' \href{http://doi.org/10.1023/A:1011233615437}{Designs, Codes, Crypt. {\bf 24}, 99--122 (2001)}; arXiv:math/0001038.
\end{thebibliography}
\end{document}